\DeclareMathOperator*{\argmax}{argmax}
\newcommand{\mb}{\mathbf}
\newcommand{\indfun}[1]{\ensuremath{\mb{1}_{\{#1\}}}}
\newtheorem{theorem}{Theorem}[section]
\newenvironment{proof}[1][Proof]{\begin{trivlist}
\item[\hskip \labelsep {\bfseries #1}]}{\end{trivlist}}
\newcommand{\qed}{\nobreak \ifvmode \relax \else
      \ifdim\lastskip<1.5em \hskip-\lastskip
      \hskip1.5em plus0em minus0.5em \fi \nobreak
      \vrule height0.75em width0.5em depth0.25em\fi}
\begin{document}
\begin{center}
\large{Likelihood-Based Inference for Discretely Observed Birth-Death-Shift Processes, with
Applications to Evolution of Mobile Genetic Elements}

\normalsize
By Jason Xu$^\ast$, Peter Guttorp$^\ast$, Midori Kato-Maeda$^\dagger$, Vladimir Minin$^{\ast, \ddagger}$

\small
$^\ast$Department of Statistics, University of Washington\\
$^\ddagger$Department of Biology, University of Washington  \\
$^\dagger$School of Medicine, University of California, San Francisco \\

\normalsize
\end{center}


\begin{abstract}
Continuous-time birth-death-shift (BDS) processes are frequently used in stochastic modeling, with many applications in ecology and epidemiology. In particular, such processes can model evolutionary dynamics of transposable elements --- important genetic markers in molecular
epidemiology. Estimation of the effects of individual covariates on the birth, death, and shift rates of the process can be accomplished by analyzing patient data, but inferring these rates in a discretely and unevenly observed setting presents computational challenges. We propose
a multi-type branching process approximation to BDS processes and develop a corresponding expectation maximization algorithm, where we use spectral techniques to reduce calculation of expected sufficient statistics to low dimensional integration. These techniques yield an efficient and robust optimization routine for inferring the rates of the BDS process, and apply more broadly to multi-type branching processes where rates can depend on many covariates. 
After rigorously testing our methodology in simulation studies, we apply our method to study intrapatient time evolution of IS\textit{6110}
transposable element, a  genetic marker frequently used during estimation of epidemiological clusters of \textit{Mycobacterium tuberculosis} infections.
 \end{abstract}

\section{Introduction}
Continuous-time branching processes are widely used in stochastic modeling of population dynamics. Originally introduced as a mathematical model for the survival of family surnames, the tools from branching process theory have since found a breadth of applications including biology, genetics, epidemiology, quantum optics, and nuclear fission \citep{renshaw2011}. One of the most widely used classes of branching processes are birth-death (BD) processes, a simple yet flexible model for single-species population dynamics. The popularity of BD processes is in part attributable to their well-understood mathematical properties.
To accurately model behavior in many applications, however,  it is often necessary to consider systems with more than one species --- bivariate or other multi-type processes are commonly used to model phenomena such as competition, predation, or infection \citep{neyman1956, alsmeyer1993, renshaw2011}. Multi-type branching processes form one class of models that can accommodate populations with multiple types, but these models pose considerable
computational challenges for statistical inference. Our work introduces new methods 
to overcome these challenges, enabling likelihood-based inference in partially observed, multi-type branching processes. 
\par
Many statistically relevant quantities are available in closed form for the linear, homogeneous BD process and several of its variants, including transition probabilities, stationary distributions, and moments \citep{bailey1964, keiding1975}. Further, analytical expressions of transition probabilities as series of orthogonal polynomials are known for general, nonlinear BD processes \citep{karlin1958} and can be conveniently computed numerically \citep{murphy1975, crawford2012}.  The ability to compute finite-time transition probabilities enables likelihood-based inference for discretely observed or partially observed BD processes, since the observed likelihood is a function of these transition probabilities. Evaluating this likelihood is necessary in maximum likelihood estimation as well as in many Bayesian inferential procedures. Recent work by \citet{doss2013} and \citet{Crawford2013} introduces techniques to additionally compute conditional moments of BD sufficient statistics for linear and general birth-death-immigration processes
, enabling calculation of the expected complete-data likelihood necessary in an expectation-maximization (EM) algorithm \citep{dempster1977}.
\par
Unfortunately, methods to evaluate finite-time transition probabilities and conditional moments are not known in the multi-type setting, and generalizing the techniques available in the single-species case is nontrivial. Solutions to the Kolmogorov equations in multi-type settings are available only for several linear, closed systems such as the immigration-death-shift process, but simple modifications such as the presence of birth events significantly complicate analysis \citep{puri1968, renshaw2011}. Without these quantities, likelihood-based estimation is limited to simulation-based inference via Monte Carlo EM or MCMC \citep{golinelli2000, golinelli2006} and asymptotic approximations, such as moment-based estimating equations \citep{catlin2001}. However, these approaches have shortcomings. MCMC approaches require augmenting the state space by high-dimensional latent variables and become computationally prohibitive when the state space is large. Moment-based methods are statistically less efficient than likelihood-based approaches and thus often inappropriate for smaller datasets, requiring a large number of observations to produce meaningful standard errors and confidence intervals.
\par
In this paper, we extend the analysis of \citet{doss2013}, deriving previously unavailable numerical solutions to transition probabilities and conditional moments for discretely observed, multi-type branching processes.  
Modifying ideas introduced by \cite{kendall1948}, we simplify the systems of backward equations for several relevant generating functions, and then apply the spectral approach of \cite{lange1982} to extract expected sufficient statistics and transition probabilities. This enables us to evaluate the observed likelihood, as well as to reduce the challenging computation of expected complete-data log-likelihood necessary in an EM algorithm to efficient evaluation of expected sufficient statistics by low-dimensional integration. Our EM algorithm can be applied in settings where the data are assumed to be generated from independent, continuous-time multi-type branching processes, observed at discrete and possibly irregularly spaced time points, whose rates can be a function of many process-specific covariates.
Medical applications, for instance, commonly feature such panel data, where rates corresponding to the transmission of a disease or growth of a cell may depend on patient-specific characteristics. While similar methods have been explored for fitting continuous-time finite state-space Markov chains to panel data \citep{jackson2011, kalbfleisch1985, lange1995, lange2013}, our method allows for multivariate and potentially infinite state-space processes. 
\par
Though our methodology applies broadly to multi-type Galton Watson branching processes, we focus attention to estimating the rates of a birth-death-shift (BDS) process. The BDS process adds the possibility of shift events to the standard BD framework, and is useful for modeling systems that allow for elements to switch locations or types --- a shift is essentially a simultaneous birth and death. For example, in epidemiological applications, interaction between infected and susceptible populations can be captured as a shift event, involving a simultaneous increase and decrease in the respective populations. Spatial BDS processes have also been studied to improve Metropolis-Hastings algorithms for perfect sampling \citep{huber2012} relevant to a range of spatial statistical applications; see \cite{illian2008} for an overview. 
Our motivation stems from the BDS process proposed by \cite{rosenberg2003} to model evolution of transposons --- mobile genetic elements that 
can replicate, die, or shift locations along the genome. Specifically, \cite{rosenberg2003} study the within-host evolution of the IS\textit{6110} transposon in the \textit{Mycobacterium tuberculosis} genome via a BDS model. Accurately estimating the rates of these events is important in molecular epidemiology \citep{tanaka2001}. 
\par
\cite{rosenberg2003} infer the birth, death, and shift rates of the BDS model of IS\textit{6110} evolution from an ongoing database of \textit{M. tuberculosis} patients from San Francisco, but their rate estimates rely on approximate maximum likelihood estimators (MLEs) under a restrictive assumption that at most one event occurs per observation interval. This study was revisited in \citep{doss2013}, in which the authors derive an EM algorithm for inference in discretely observed birth-death-immigration (BDI) processes amenable to high-dimensional optimization. Although this approach more realistically allows multiple events to occur per interval, the methodology in \citep{doss2013} is limited to the single-species setting, effectively replacing the BDS model with a simpler model that ignores particle locations and shift events.
Our methodology extends the analysis of \cite{doss2013}, allowing for both the possibility of multiple events per observation interval as well as the consideration of shift events. We show that the dynamics of the BDS model can be captured in a two-type branching process framework in that transition probabilities of both processes are nearly identical. We then derive an EM algorithm for discretely observed multi-type branching processes, and rigorously assess its performance in several simulation studies. Finally, we revisit the San Francisco tuberculosis dataset, applying our algorithm to estimate rates of the IS\textit{6110} transposon as a function of relevant covariates.

\section{Methodology}


\subsection{Birth-death-shift model for transposable elements}
Our motivation stems from a birth-death-shift process proposed by  \citet{rosenberg2003} to model evolutionary dynamics of transposable elements or transposons --- genomic mobile sequence elements. Each transposon can (1) duplicate, with the new copy moving to a new genomic location; (2) shift to a different genomic position; or (3) be removed and lost from the genome, independently of all other transposons. These events occur at instantaneous rates proportional to the total transposon copy number at that time. Thus, transposons evolve according to a linear birth-death-shift (BDS) process in continuous time.

The process of transposon evolution within a host is observable by serially genotyping the organism of interest, e.g., \textit{Mycobacterium tuberculosis} as in \citet{rosenberg2003}. 
\textit {M. tuberculosis} genome typically has between 0 and 25 copies of the IS6110 element. The number and chromosomal position of the IS\textit{6110} element can be visualized using restriction fragment length polymorphism (RFLP). This technique entails restriction endonuclease digestion of the \textit{M. tuberculosis} DNA which is run in an agarose gel, southern blotting and probing with a peroxidase labeled IS\textit{6110} probe. Birth, death, and shift events are thus detectable via changes in the number and size of the bands where the IS6110 elements are located.


\begin{figure}[t]
\centering
\includegraphics[width = .38\paperwidth]{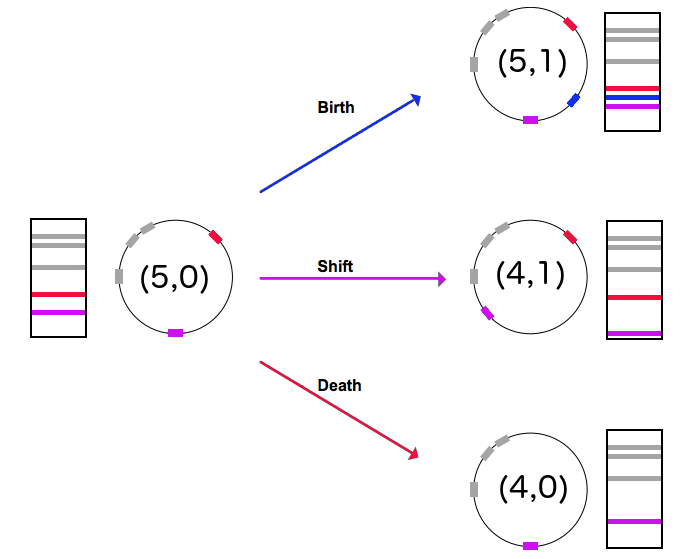}
\caption{\small Illustration of the three types of transposition---birth, death, shift---along a genome, represented by circles. Transposons, depicted by filled rectangles along the circles/genomes, correspond to observable gel bands, denoted by horizontal lines in the rectangles next to each circle diagram. Numbers within each circle represent each configuration $\mb{X}(t)$ in the notation introduced in section \ref{sec:reduced}. More specifically, we call the gel band on the left our initial configuration and set the number of particles of type 1 to the number of bands, 5, and the number of particles of type 2 to 0. 
On the right set of diagrams, a birth event keeps the number of type 1 particles intact and increments the number of type 2 particles by one, a death event changes the number of type 1 particles from five to four and keeps the number of type 2 particles at zero, and finally a shift event decreases the number of type 1 particles by one and increases the number of type 2 particles by one.} 
\label{fig:BDS}
\end{figure}
Estimating the rates based on observed changes at genotyping times in this experimental setup corresponds to inference in a discretely observed linear BDS process.
That is, we assume each element behaves independently, and that overall rates of each event are proportional to total copy number $k$. Together with the time-homogeneity assumption, waiting times until occurrence of an event are distributed exponentially with rate $k\theta$, where $\theta = \lambda + \mu + \nu$. When an event occurs, the probability that it is birth, death, or shift is given by $\lambda/\theta, \nu/\theta,$ and $\mu/\theta$ respectively. The BDS process is therefore a continuous-time Markov chain (CTMC). 

The states in our process $\widetilde{\mb{x}} \in \left\{0,1 \right\}^S := \widetilde\Omega $ can be represented as binary vectors, where $S$ is the number of possible locations transposons may occupy along the genome, $0$'s denote unoccupied sites, and $1$'s correspond to sites occupied by a transposon. Now, denote the $2^S \times 2^S$ rate matrix or infinitesimal generator corresponding to this CTMC as $\mathbf{Q} = \left\{ q_{\widetilde{\mb{x}}_1, \widetilde{ \mb{x}}_2} \right\}$, where $q_{\widetilde{\mb{x}}_1, \widetilde{ \mb{x}}_2} $ denotes the instantaneous rate of jumping to $\widetilde{\mb{x}}_2$ beginning from $\widetilde{\mb{x}}_1$ with $\widetilde{\mb{x}}_1, \widetilde{\mb{x}}_2 \in \widetilde\Omega$. 
To write down the entries of $\mathbf{Q}$, first define $C^+(\widetilde{\mb{x}})$ as the set of all configurations with one additional site occupied relative to $\widetilde{\mb{x}}$. Thus, $C^+(\widetilde{\mb{x}})$ contains states corresponding to one birth event beginning with $\widetilde{\mb{x}}$. Similarly $C^\rightarrow(\widetilde{\mb{x}})$ contains states where one additional site is occupied and one originally occupied site is no longer occupied, and $C^-(\widetilde{\mb{x}})$ contains states where one originally occupied site in $\widetilde{\mb{x}}$ is no longer occupied. 
Then $| C^+(\widetilde{\mb{x}}_1) | = S - k$, $| C^-(\widetilde{\mb{x}}_1) | = k$, and $| C^\rightarrow(\widetilde{\mb{x}}_1) | = | C^+(\widetilde{\mb{x}}_1) | \times | C^-(\widetilde{\mb{x}}_1) |$, and finally the entries of the generator $\mathbf{Q}$ are given by
\begin{equation}\label{eq:Q}
q_{\widetilde{\mb{x}}_1, \widetilde{ \mb{x}}_2} = \frac{\lambda}{| C^+(\widetilde{\mb{x}}_1) |} \indfun{\widetilde{\mb{x}}_2 \in C^+(\widetilde{\mb{x}}_1)} + \frac{\nu}{| C^\rightarrow(\widetilde{\mb{x}}_1) |} \indfun{\widetilde{\mb{x}}_2 \in C^\rightarrow(\widetilde{\mb{x}}_1)}  
+ \frac{\mu}{| C^-(\widetilde{\mb{x}}_1) |} \indfun{\widetilde{\mb{x}}_2 \in C^-(\widetilde{\mb{x}}_1)}. 
\end{equation}

\subsection{BDS process with covariates}

We are interested in inference when the data consist of $m$ independent processes $\left\{ \widetilde{\mathbf{X}}^{p}(t) \right\}$, $p = 1, \ldots, m$, each discretely observed at times $0 = t_{p,0} < t_{p,1} < \ldots < t_{p,n(p)}$. We assume each $\{\widetilde{\mathbf{X}}^{p}(t) \}$ process evolves according to a linear BDS model with per-particle instantaneous birth rate $\lambda_p \geq 0$, shift rate $\nu_p \geq 0$, and death rate $\mu_p \geq 0$. 
The data, observations from each process, are points in the previously defined state space, 
with $\widetilde{\mathbf{X}}^{p}(t) \in \widetilde\Omega$ for any fixed $p$ and $t$. 
For example, in transposon evolution, each patient $p$ is genotyped at $n(p)+1$ observation times, and at each given time, the $1$'s present in the data vector correspond to locations in the gel currently occupied by transposons. 
The observed data corresponding to a given process $\{\widetilde{\mathbf{X}}^{p}(t) \}$ can thus be collected in a $S \times [n(p)+1]$ matrix with columns corresponding to observation times, and the full observed dataset can be collected into a $S \times \sum_{p=1}^m [n(p)+1]$ matrix: 
$\mathbf{Y} = (\widetilde{\mathbf{X}}^{1}(t_{1,0}), \ldots, \widetilde{\mathbf{X}}^{1} ( t_{1,n(1)}) , \ldots, \widetilde{\mathbf{X}}^{m}( t_{m,0}), \ldots, \widetilde{\mathbf{X}}^{m} (t_{m,n(m)})).$  

The rates of each process are determined by a vector of $c$ covariates $\mb{z}_p = (z_{p,1}, z_{p,2}, \ldots, z_{p, c} ) \in \mathbb{R}^c$ through a log-linear model: 
\begin{equation}\label{eq:loglin}
  \log(\lambda_p) = \boldsymbol\beta^\lambda \cdot \mathbf{z_p},  \quad \log(\nu_p) = \boldsymbol\beta^\nu \cdot \mathbf{z_p}, \quad\log(\mu_p) = \boldsymbol\beta^\mu \cdot \mathbf{z_p},
  \end{equation}
where  $\boldsymbol\beta := (\boldsymbol\beta^\lambda,  \boldsymbol\beta^\nu,  \boldsymbol\beta^\mu)$ are the regression coefficients and  $\cdot$  represents a vector product. For instance, in an epidemiological study, these covariates may contain patient-specific disease process and demographic information.

The observed data log-likelihood is obtained by summing over transition terms of observations for each process, and summing over all processes:
\begin{equation}\label{eq:oll}
\widetilde{\ell_o}(\mb{Y} ; \boldsymbol\beta) = \sum_{p=1}^m \sum_{j = 0}^{n(p)-1} \log \widetilde{p}_{\widetilde{\mathbf{X}}^{p} ( t_{p,j}) , \widetilde{\mathbf{X}}^{p} ( t_{p,j+1})} ( t_{p, j+1} - t_{p,j} ; \lambda_p, \nu_p, \mu_p),
 \end{equation}
where $\widetilde{p}_{\widetilde{\mathbf{ x}}_1,\widetilde{ \mathbf {x}}_2}(t ; \lambda, \nu, \mu) = \text{Pr}_{\lambda, \nu, \mu} (\widetilde{\mathbf{X}}(t)= \widetilde{\mathbf{x}}_2 \mid \widetilde{\mathbf{X}}(0) = \widetilde{\mathbf{ x}}_1 )$ denotes a transition probability of the BDS process.
We are interested in computing the maximum likelihood estimates (MLEs) of parameters $\boldsymbol\beta$ of the BDS process. Notice that if the transition probabilities were available for given $\lambda, \nu, \mu, $ and $t$ values, one could maximize the likelihood in \eqref{eq:oll} using standard off-the-shelf optimization procedures. However, due to the large state space of all possible configurations of occupied sites, analysis of these transition probabilities is intractable.
To approximate the BDS model likelihood above, we introduce a two-type branching process such that computationally tractable transition probabilities of this process are numerically close to the transition probabilities of the BDS model over any observation interval. 
The following sections detail the correspondence between the BDS model and the two-type branching process, and develops methodology for inference  in the branching process framework.


\subsection{Reducing the state space}\label{sec:reduced}
The size of the original state space $ |\widetilde\Omega | = 2^S$ quickly becomes unmanageable as $S$ grows
so that analysis using the rate matrix defined in \eqref{eq:Q} becomes unwieldy for all but small values of $S$.
Previous work by \cite{doss2013} addresses this issue by collapsing the state space to one dimension, distilling the data to copy number counts at each observation time. In this simplified setting, they develop tools for inference in a discretely observed birth-death-immigration framework.  However, this approximate model ignores particle shifts which do not affect the total copy number, rendering the shift rate unidentifiable. Further, collapsing the state space in this way violates the Markov assumption in the BDS model. In particular, waiting times between birth and death events are exponentially distributed under the model in \cite{doss2013}, but under the BDS model with shift events, the waiting time between a birth and death no longer follows an exponential distribution. 


Instead of ignoring shifts, we propose a reduction of the state space into a two-dimensional representation $\Omega \in \mathbb{N} \times \mathbb{N}$.
Elements of this reduced space are pairs $\mb{X}(t)= (x_{old}, x_{new}) \in \Omega$ tracking the number of originally occupied and newly occupied sites at the end of each observation interval. As an example, assume six particles are present initially at time $t_0$,  and a shift and a birth occur before the first observation $t_1$, and a death occurs before a second observation at $t_2$. When considering the first observation interval $[t_0,t_1)$, we have $ \left\{ \mb{X}(t_0) = (6,0), \mb{X}(t_1)=  (5,2) \right\}$. When computing the next transition probability over $[t_1, t_2)$, we now have $\left\{ \mb{X}(t_1) = (7,0), \mb{X}(t_2) = (6,0) \right\}$, since all seven of the particles at $t_1$, now the left endpoint of the observation interval, now become the initial population. This seemingly inconsistent definition of the state at $\mb{X}(t_1)$ is not a problem: we will see that all necessary computations occur separately on disjoint intervals, so that our reduced representation of the original process needs only to be defined consistently for any given pair of consecutive observations.

Formally, this state space transformation is a mapping $\psi: \widetilde\Omega \times \widetilde\Omega \rightarrow \Omega \times \Omega$ on consecutive pairs of observations in $\widetilde\Omega$ to the reduced state space that can be computed
$\psi: \left\{ \widetilde{\mathbf{X}}(t_1), \widetilde{\mathbf{X}}(t_2) \right\} \mapsto \left\{ (a, 0) , (b,c) \right\}  = \{ \mathbf{X}(t_1), \mathbf{X}(t_2) \}$, 
where $a =  \sum_{j=1}^S \widetilde{X}_{j}(t_1)$ is the total number of initially occupied sites in $\widetilde{\mathbf{X}}(t_1)$, $b = \sum_{j=1}^S \indfun{\widetilde X_j(t_2) = \widetilde X_j(t_1)} \widetilde X_j(t_1) $ is the number of initially occupied sites that remain occupied, and $c = \sum_{j=1}^S \indfun{\widetilde X_j(t_2) - \widetilde X_j(t_1) = 1}$ is the number of newly occupied sites in $\widetilde{\mathbf{X}}(t_2)$ not present in $\widetilde{\mathbf{X}}(t_1)$. 


Note that while $\psi$ significantly reduces the size of the state space, the mapping discards information about specific particle locations, which is uninformative to inferring birth, death, and shift rates due to symmetry induced by particle independence. The number of changes in locations between observations --- the data relevant to our estimation task --- is preserved in the image of $\psi$. 

\subsection{Modeling via two-type branching process}
Working now in the space $\Omega$, we can treat $x_{old}$ and $x_{new}$ as particle types in a two-type branching process.
Let $a_{j}(k,l)$ be the rate of producing $k$ type 1 particles and $l$ type 2 particles, beginning with one type $j$ particle, $j = 1,2$. Then the nonzero rates defining the two-type branching process corresponding to the birth-death-shift model are given by
\begin{align}
a_1(1,1) &= \lambda, && a_1(0,1) = \nu, && a_1(0,0) = \mu, && a_1(1,0) = -(\lambda + \nu + \mu),  \label{eq:rates} \\
a_2(0,2) &= \lambda, && a_2(0,1) = -(\lambda + \mu), && a_2(0,0) = \mu. \nonumber
\end{align}
This characterization enables us to apply a generating function approach to calculate transition probabilities of the process. Defining $X_j(t)$, the number of particles of type $j$ at time $t$, we consider the generating function 
\begin{equation}\label{eq:probgen}
\phi_{jk}(t, s_1, s_2) = \text{E} \left(s_1^{X_1(t)} s_2^{X_2(t)} \mid X_1(0) = j, X_2(0) = k \right) = \sum_{l=0}^\infty \sum_{m=0}^\infty p_{(j,k), (l,m)} (t) s_1^l s_2^m. 
\end{equation}
Using the Kolmogorov backward equations, we derive equations and a closed form solution for $\phi_{jk}$ (see Appendix A).  Although an analytical expression is available, it involves special functions that are in practice are often unstable. Instead, we simplify the backward equations so that evaluating $\phi_{jk}$ only requires solving a single linear ordinary differential equation, which is easily accomplished using standard Runge-Kutta methods \citep{butcher1987}.

With $\phi_{jk}$ available, we see from \eqref{eq:probgen} that the transition probabilities $p_{(j,k), (l,m)} (t)$ can then be obtained by differentiating and setting $s_1, s_2 = 0$, but without an analytical expression for these derivatives, repeated numerical differentiation is inefficient and numerically unstable. Instead, we follow \citet{lange1982} and \citet{doss2013} and map our domain $[0,1] \times [0,1]$ to the boundary of a unit circle in the complex plane by setting $s_1 = e^{2 \pi i w_1}, s_2 = e^{2 \pi i w_2}$. Under this change of variables, the generating function becomes a Fourier series
\[\phi_{jk}\left(t, e^{2 \pi i w_1}, e^{2 \pi i w_2} \right)= \sum_{l,m = 0}^\infty p_{(j,k), (l,m)}(t) e^{2 \pi i l w_1} e^{2 \pi i m w_2} . \]
Applying a Riemann sum approximation to the integral corresponding to coefficients given by the Fourier inversion formula, we can compute the transition probabilities using
\begin{equation}\label{eq:FFT}
\begin{split}
 p_{(j,k),(l,m)}(t) &= \int_0^1 \int_0^1 \phi_{jk}(t, e^{2 \pi i w_1}, e^{2 \pi i w_2}) e^{-2 \pi i l w_1} e^{-2 \pi i m w_2} dw_1 dw_2  \\
 & \approx  \frac{1}{N^2} \sum_{u = 0}^{N-1} \sum_{v= 0}^{N-1} \phi_{jk}(t, e^{2 \pi i u/N}, e^{2 \pi i v/N}) e^{-2 \pi i l u/N} e^{-2 \pi i m v/N} .
 \end{split}
 \end{equation}
Choice of a larger $N$ leads to a finer and thus more accurate Riemann sum approximation of the integral, and also allows us to compute transition probabilities to and from a larger total particle population of either type.
The Fast Fourier transform (FFT) enables efficient computation of these coefficients \citep{henrici1979}, and in our application and simulation studies, we find that a grid size as small as $N=16$ yields accurate results. With transition probabilities available, we may closely approximate $\widetilde{\ell_o}(\mb{Y} ; \boldsymbol\beta) $ by the branching process likelihood
\begin{equation}\label{eq:approxobservedlike}
\ell_o(\mb{Y} ; \boldsymbol\beta) = \sum_{p=1}^m \sum_{j = 0}^{n(p)-1} \log p_{\mathbf{X}^{p} ( t_{p,j}), \mathbf{X}^{p} ( t_{p,j+1})} ( t_{p, j+1} - t_{p,j} ; \lambda_p, \nu_p, \mu_p),  \end{equation}
so that maximizing the observed likelihood in \eqref{eq:oll} gives approximately the same 
parameter estimates as maximizing \eqref{eq:approxobservedlike}. 

\subsection{EM algorithm for the BDS process}
With transition probabilities of the process available, it is already possible to produce MLEs of the covariate effects associated with birth, death, and shift rates by numerical maximization of the observed likelihood. However, an EM algorithm approach often outperforms off-the-shelf optimization procedures in missing data problems, offering a significantly faster and more robust solution.
Let $\ell_c ( \mb{X}, \boldsymbol\beta)$ denote the complete data log-likelihood, $\mb{X}$ the complete data, and $\mb{Y}$  the available observations. The EM algorithm begins with an initial parameter estimate $\boldsymbol\beta_0$, and then at each $j^\text{th}$ iteration, updates the estimate by setting
\begin{equation}\label{eq:EM}
\boldsymbol\beta_j = \argmax_{\boldsymbol\beta} \text{E}_{\boldsymbol\beta_{j-1}} \left[ \ell_c (\mb{X}, \boldsymbol\beta) \mid \mb{Y} \right].
\end{equation}
Each iteration involves a computation of the expectation term called the \textit{E-step}, followed by a maximization of the expectation called the \textit{M-step}. 

 \subsubsection{E-step}
The fully observed BDS process is a continuous-time Markov chain, so its complete-data log-likelihood can be written as 
\begin{equation}\label{eq:completell}
 \ell_c(\mb{X}; \boldsymbol\beta) = \sum_{p=1}^m \left[ b_p \log \lambda_p + f_p \log \nu_p + d_p \log \mu_p - (\lambda_p + \mu_p + \nu_p) \sum_{k=0}^{\infty} k \tau_p(k) + \sum_{k=0}^{\infty} \log \tau_p(k) \right], 
 \end{equation}
where $\tau_p(k)$ is the total time process $\mb{X}^p(t)$ spends with total copy number $X^p_1(t) + X^p_2(t) = k$, $b_p$ is the total number of births, $f_p$ the number of shifts, and $d_p$ the number of deaths for each patient $p = 1 , \ldots, m$ --- these quantities are the complete data sufficient statistics \citep{guttorp1995}. Notice the final term in \eqref{eq:completell} is constant with respect to the parameters. We see that in order to obtain the expected complete-data log-likelihood, we need  to calculate only expected births --- $\text{E}_{\boldsymbol\beta}\left[ b_p \mid \mb{Y} \right]$, shifts --- $\text{E}_{\boldsymbol\beta}[f_p \mid \mb{Y} ]$, deaths --- $\text{E}_{\boldsymbol\beta}[d_p \mid \mb{Y}]$, and particle time --- $\text{E}_{\boldsymbol\beta} \left[R_p \mid \mb{Y} \right]$, where the last quantity is defined as
\[\text{E}_{\boldsymbol\beta} \left[R_p \mid \mb{Y} \right] :=  \text{E}_{\boldsymbol\beta} \left[ \int _{t_{p,0}}^{t_{p,n(p)}} X_1(s)+X_2(s) ds \mid \mb{Y} \right] =  \text{E}_{\boldsymbol\beta}\left[ \sum_{k=0}^{\infty} k \tau_p(k) \mid \mb{Y} \right]. \]
By independence of the $p$ processes and linearity of expectations, each expectation breaks into sums of expectations over the observation intervals. Further, by homogeneity, it suffices to be able to calculate the quantities
 \[ e^+ _{jk,lm} (t) = \text{E} \left[ b_{p,t} \mid \mb{X}^p(0) = (j,k), \mb{X}^p(t) = (l,m) \right], \]
\[ e^{\rightarrow}_{jk,lm} (t) = \text{E} \left[ f_{p,t} \mid \mb{X}^p(0) = (j,k), \mb{X}^p(t) = (l,m) \right], \]
\[ e^- _{jk,lm} (t) = \text{E} \left[ d_{p,t} \mid \mb{X}^p(0) = (j,k), \mb{X}^p(t) = (l,m) \right], \]
\[ e^*_{jk,lm} (t) = \text{E} \left[ R_{p,t} \mid \mb{X}^p(0) = (j,k), \mb{X}^p(t) = (l,m) \right], \]
for all non-negative integers $j,k,l,m$. Dependence of these quantities on rates $\lambda_p, \nu_p, \mu_p$ is  suppressed in the notation here for simplicity.
As noticed by  \citet{minin2008} and \citet{doss2013}, it is easier to work via the restricted moments
\begin{align*}
m^+_{jk,lm}(t) &= \text{E} \left[ b_{p,t} 1_{ \{\mb{X}^p(t) = lm \} } \mid \mb{X}^p(0) = (j,k) \right] = \sum_{n=0}^\infty n q^+_{jk,lm} (n,t),\\
m^\rightarrow_{jk,lm}(t) &= \text{E} \left[ f_{p,t} 1_{ \{\mb{X}^p(t) = lm \} } \mid \mb{X}^p(0) = (j,k) \right] = \sum_{n=0}^\infty n q^\rightarrow_{jk,lm} (n,t),\\
m^-_{jk,lm}(t) &= \text{E} \left[ d_{p,t} 1_{ \{\mb{X}^p(t) = lm \} } \mid \mb{X}^p(0) = (j,k) \right] = \sum_{n=0}^\infty n q^-_{jk,lm} (n,t),\\
m^*_{jk,lm}(t) &= \text{E} \left[ R_{p,t} 1_{ \{\mb{X}^p(t) = lm \} } \mid \mb{X}^p(0) = (j,k) \right] = \int_{x=0}^\infty x \mathrm{d}  q^*_{jk,lm} (x,t),
\end{align*}
where 
\begin{align*}
q^*_{jk,lm}(x,t) &= \text{Pr}[R_{p,t} \leq x, \mb{X}^p(t) = (l,m) \mid \mb{X}^p(0)= (j,k)],\\
q^+_{jk,lm}(n,t) &= \text{Pr}[ b_{p,t} = n, \mb{X}^p(t) = (l,m) \mid \mb{X}^p(0) = (j,k)],
\end{align*}
and $q^\rightarrow, q^-$ are defined analogously. The conditional expectations can then be recovered after dividing by transition probabilities, i.e.
\[ e^+_{jk,lm}(t) = m^+_{jk,lm}(t)/ p_{jk,lm}(t). \]
These restricted moments can be computed with a similar approach used to obtain transition probabilities. We begin by defining the pseudo-generating functions: for expected births, let
\[ g^+_{jk,lm}(r,t) = \sum_{n=0}^\infty q^+_{jk,lm}(n,t) r^n. \]
Ignoring notational dependence on individual patients for simplicity, we define the joint generating function
\begin{align*}
&H^+_{jk}(r,s_1, s_2, t) = \text{E} \left[ r^{b_t}  s_1^{X_1(t)} s_2^{X_2(t)} \mid \mb{X}(0) = (j,k) \right] \\
	&= \sum_l \sum_m \sum_n \text{P}r\left[b_t = n , \mb{X}(t) = (k,l) \mid \mb{X}(0) = (j,k) \right] r^n s_1^l s_2^m 
	= \sum_l \sum_m g^+_{jk,lm} (r,t) s_1^l s_2^m  .
\end{align*}
Pseudo-generating functions for shifts and deaths are defined analogously, and the pseudo-generating function for particle time is defined as
\begin{align*}
H^*_{jk}(r,s_1, s_2, t) &= \sum_l \sum_m \int_{x=0}^\infty  e^{-rx} \mathrm{d} q^*_{jk,lm}(x,t)  s_1^l s_2^m := \sum_l \sum_m V_{jk,lm} (r,t) s_1^l s_2^m ,
\end{align*}
where $V_{jk,lm}(r;t) = \int_0^\infty e^{-rx} \mathrm{d} q^*_{jk,lm}(x;t)$ is the Laplace-Stieltjes transform of $q^*_{jk,lm}(x;t)$.
In each case we can define series whose coefficients are our quantities of interest by partial differentiation:
\begin{equation}\label{eq:partialBirths}
G_{jk}^+(s_1, s_2, t) = \frac{d}{dr} H_{jk}^+ (r, s_1, s_2, t) \bigg|_{r=1} = \sum_l \sum_m \left[ \sum_n n q^+_{jk,lm}(n,t) \right] s_1^l s_2^m = \sum_l \sum_m m^+_{jk,lm}(t) s_1^l s_2^m .
\end{equation}
$G_{jk}^\rightarrow$ and $G_{jk}^-$ are defined analogously, and the expression for particle time is instead differentiated at $r=0$:
\begin{equation}\label{eq:partialParticleTime}
G_{jk}^*(s_1, s_2, t) = \frac{d}{dr} H_{jk}^* (r, s_1, s_2, t) \bigg|_{r=0} = \sum_l \sum_m \left[ \int_{x=0}^\infty x \mathrm{d}  q^*_{jk,lm} (x,t) \right] s_1^l s_2^m = \sum_l \sum_m m^*_{jk,lm}(t) s_1^l s_2^m .
\end{equation}
We see that given expressions for $H^+_{jk}$, $H^\rightarrow_{jk}$, $H^-_{jk}$, and $H^*_{jk}$, the coefficients corresponding to moments $m^+_{jk,lm}$, $m^\rightarrow_{jk,lm}$, $m^-_{jk,lm}$,  $m^*_{jk,lm}$ can then be numerically computed using FFT analogously to \eqref{eq:FFT} by replacing $\phi_{jk}$ with the corresponding $G_{jk}$ functions. For notational simplicity, we use $G_{jk}$ when referring collectively to $G^+_{jk}$, $G^\rightarrow_{jk}$, $G^-_{jk}$, and $G^*_{jk}$, and similarly define $H_{jk}$.
\par
Having reduced our task to computing $H_{jk}$, we define $H_1 := H_{10}(r,s_1, s_2, t)$ and  $H_2 := H_{01}(r,s_1, s_2, t)$.
By independence of particles in the branching process, we have $H_{jk} = H_1^j H_2^k$. In all four cases, $H_2$ is analytically available, and we derive an ordinary differential equation for $H_1$, summarized in the theorem below. 
  We present the result for a branching process with rates corresponding to the birth-death-shift model, but such systems of equations are available for an arbitrary time-homogeneous multi-type branching process.
\begin{theorem}\label{thm:main}
Let $\{ X_t \}$ be a two-type branching defined by the rates in \eqref{eq:rates}. Denote particle time and the number of births, shifts, and deaths over the interval $[0,t)$ by $R_t, b_t, f_p$, and $d_t$ respectively. Define the generating functions corresponding to births as 
\begin{align*}
H^+_1(r,s_1, s_2, t) &= \text{E} \left[ r^{b_t}  s_1^{X_1(t)} s_2^{X_2(t)} \mid \mb{X}(0) = (1,0) \right] \text{ and}\\
H^+_2(r,s_1, s_2, t) &= \text{E} \left[ r^{b_t}  s_1^{X_1(t)} s_2^{X_2(t)} \mid \mb{X}(0) = (0,1) \right].
\end{align*}
Then
\[
H_2^+ = y_b + \left[  \frac{-\lambda r}{2 \lambda r y_b - \lambda - \mu} + \left( \frac{1}{s_2 - y_b} + \frac{\lambda r}{2 \lambda r y_b - \lambda - \mu} \right)  e^{-(2y_b \lambda r - \lambda - \mu)t} \right] ^{-1},
\]
where $y_b = (\lambda + \mu + \sqrt{ \lambda^2 + 2 \lambda \mu + \mu^2 - 4 \lambda \mu r})/(2 \lambda r)$, and $H_1^+$ satisfies the following differential equation:
\begin{equation}\label{eq:birthODE}
\frac{d}{dt} H_1^+(t,s_1,s_2,r) = \lambda r H_1^+ H_2^+ + \nu H_2^+ + \mu - (\lambda + \mu + \nu) H_1^+, 
\end{equation}
subject to initial condition $H_1(r,s_1,s_2,0) = s_1$.

The analogous generating functions for shifts, deaths, and particle time satisfy the following equations:
\begin{align*}
H_2^- (t,s_1,s_2,r) &= y_d + \left[  \frac{-\lambda }{2 \lambda y_d - \lambda - \mu} + \left( \frac{1}{s_2 - y_d} + \frac{\lambda }{2 \lambda y_d - \lambda - \mu} \right)  e^{-(2y_d \lambda  - \lambda - \mu)t} \right] ^{-1},\\
H_2^\rightarrow (t,s_1,s_2,r) &= 1 + \left[ \frac{\lambda}{\mu - \lambda} + (\frac{1}{s_2-1} + \frac{\lambda}{\lambda - \mu})e^{(\mu - \lambda)t} \right] ^{-1}, \\
H_2^*(t,s_1,s_2,r) &= y_* + \left[  \frac{-\lambda }{2 \lambda y_* - \lambda - \mu - r} + \left( \frac{1}{s_2 - y_*} + \frac{\lambda }{2 \lambda y_* - \lambda - \mu - r} \right)  e^{-(2y_* \lambda  - \lambda - \mu - r)t} \right] ^{-1} ,
\end{align*}
\begin{align*}
\frac{d}{dt} H_1^- (t,s_1,s_2,r) &=  \lambda H_1^- H_2^- + \nu H_2^- + \mu r - (\lambda + \mu + \nu) H_1^-, \\
\frac{d}{dt} H_1^\rightarrow (t,s_1,s_2,r) &=  \lambda H_1^\rightarrow H_2^\rightarrow + \nu r H_2^\rightarrow + \mu  - (\lambda + \mu + \nu) H_1^\rightarrow,  \\
\frac{d}{dt} H_1^* (t,s_1,s_2,r) &=  \lambda H_1^* H_2^* + \nu H_2^* + \mu - (\lambda + \mu + \nu + r) H_1^*,
\end{align*}
where $y_d = (\lambda + \mu + \sqrt{ \lambda^2 + 2 \lambda \mu + \mu^2 - 4 \lambda \mu r})/(2 \lambda)$, 
$y_* = (\lambda + \mu + r +  \sqrt{( \lambda + \mu + r)^2 - 4 \lambda \mu})/(2 \lambda)$, and 
$H_1^-(r,s_1,s_2,0) = H_1^\rightarrow(r,s_1,s_2,0) = H_1^*(r,s_1,s_2,0) = s_1$.

%
%

\end{theorem}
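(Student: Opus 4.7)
The plan is to derive all four pairs of equations directly from the Kolmogorov backward equations for the two-type branching process defined by rates \eqref{eq:rates}, exploiting the factorization $H_{jk} = H_1^j H_2^k$ already recorded in the text. Because of this factorization, it suffices to treat the two initial conditions (one type 1 particle; one type 2 particle) separately, and the four markers --- births, shifts, deaths, and particle time --- enter the generator in case-specific but analogous ways. The key structural observation is that under \eqref{eq:rates} type 2 particles never produce type 1 particles, so in every case the equation for $H_2$ decouples from that for $H_1$. I would therefore solve for $H_2$ in closed form first and then leave the $H_1$ equation as an ODE driven by this known $H_2$.

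To derive the ODEs themselves, I would condition on the outcome of an infinitesimal time interval $[0, dt)$. For $H_1^+$, for example, the initial type 1 particle gives birth with probability $\lambda \, dt$ (creating a type 1 / type 2 pair and contributing a marker factor $r$), shifts with probability $\nu \, dt$ (becoming a single type 2 particle), dies with probability $\mu \, dt$, or remains unchanged otherwise. Collecting terms --- using independence, so that the contribution after a birth factors as $r\, H_1^+(t-dt)\, H_2^+(t-dt)$ --- and letting $dt \to 0$ produces the stated ODE \eqref{eq:birthODE}. The same calculation from initial state $(0,1)$ yields the Riccati equation
\[
\frac{d}{dt} H_2^+ = \lambda r (H_2^+)^2 - (\lambda+\mu) H_2^+ + \mu,
\]
and identical bookkeeping with the marker relocated to the shift event or the death event gives the analogous equations for $H^\rightarrow$ and $H^-$. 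For particle time, $R_t = \int_0^t (X_1(s) + X_2(s)) \, ds$ is a path functional of the chain rather than an event count, so a Feynman--Kac argument augments the generator by $-r(X_1 + X_2)$; this yields the extra $-r H^*$ killing term in both the $H_1^*$ and $H_2^*$ equations and, equivalently, shifts $\lambda + \mu$ to $\lambda + \mu + r$ in the $H_2^*$ Riccati.

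What is left is to solve each $H_2$ Riccati in closed form, and the same device handles all four cases. A Riccati $\dot y = A y^2 + B y + C$ has constant solutions at the roots $y_\pm$ of $A y^2 + B y + C = 0$; substituting $u = y - y_+$ cancels the constant term and reduces the equation to $\dot u = A u^2 + (2 A y_+ + B) u$, and the further substitution $v = 1/u$ linearizes this to $\dot v + (2 A y_+ + B) v = -A$. The initial condition $H_2(0) = s_2$ translates to $v(0) = 1/(s_2 - y_+)$, and inverting via $H_2 = y_+ + 1/v$ reproduces the stated formula. For $H_2^+$ the relevant root is precisely $y_b$ as defined in the theorem; substituting the case-specific coefficients $(\lambda, -(\lambda+\mu), \mu r)$, $(\lambda, -(\lambda+\mu), \mu)$, and $(\lambda, -(\lambda+\mu+r), \mu)$ together with the respective roots $y_d$, $1$, and $y_*$ reproduces $H_2^-$, $H_2^\rightarrow$, and $H_2^*$. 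The $H_1$ initial conditions $H_1(r,s_1,s_2,0) = s_1$ are immediate from definitions. I expect the main obstacle to be purely notational --- carefully tracking where each marker enters each generator and matching constants to the stated formulas --- with the only genuinely substantive step being the Feynman--Kac justification of the $-r$ term in the particle-time case.
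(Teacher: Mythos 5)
Your proposal is correct and follows essentially the same route as the paper's Appendix B: a first-order infinitesimal expansion yields the backward systems, the decoupled $H_2$ Riccati equations are solved by identifying a constant particular solution and making the reciprocal substitution, and the $H_1$ equations are left as ODEs driven by the closed-form $H_2$. The only cosmetic difference is that you invoke a Feynman--Kac argument for the $-r$ killing term in the particle-time case, where the paper derives the same augmented generator explicitly via the Laplace--Stieltjes transform of the accumulated-reward distribution following Neuts.
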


\begin{proof} The derivations are included in the Appendix B, using the birth equations \eqref{eq:birthODE} as a detailed example. The other systems follow analogous derivations.  \end{proof}


This theorem shows that for each of the necessary sufficient statistics, computations for $H_{jk}$ are essentially reduced to solving a single ordinary differential equation. As discussed for transition probabilities, this is easily accomplished using Runge-Kutta methods, which in practice offer more numerical stability than working with solutions obtained analytically by integrating the ODE. Because we can evaluate $H_{jk}$, we can also easily differentiate $H_{jk}$ numerically, yielding access to numerical solutions to $G_{jk}$.

To summarize, with $H^+_{jk}, H^-_{jk}, H^\rightarrow_{jk} , H^*_{jk}$ now available, we may obtain the restricted moments by computing the coefficients in the power series $G^+_{jk}, G^-_{jk}, G^\rightarrow_{jk} , G^*_{jk}$. These coefficients are recovered using a Riemann approximation to the Fourier inversion formula analogous to formula \eqref{eq:FFT}. For instance,
\[ \begin{split}
 m^+_{(jk),(lm)}(t) &= \int_0^1 \int_0^1 G^+_{jk}\left(t, e^{2 \pi i t_1}, e^{2 \pi i t_2}\right) e^{-2 \pi i l t_1} e^{-2 \pi i m t_2} dt_1 dt_2  \\
 & \approx  \frac{1}{N^2} \sum_{u = 0}^{N-1} \sum_{v= 0}^{N-1} G^+_{jk}\left(t, e^{2 \pi i u/N}, e^{2 \pi i v/N}\right) e^{-2 \pi i l u/N} e^{-2 \pi i m v/N} .
 \end{split} \]
 
We are thus able to compute all necessary quantities appearing in the expected complete-data log-likelihood $\text{E}_{\widetilde{\boldsymbol\beta}} \left[ \ell_c (\mb{X}, \boldsymbol\beta) \mid \mb{Y} \right] $. Recall that sufficient statistics for each patient $b_p, f_p, d_p,$ and $R_p$ break up over intervals: i.e.\ the total number of births $b_p$ is equal to the sum of the number of births over each disjoint interval $[ t_{p,j-1}, t_{p,j} )$, with $j = 1, \ldots, n(p)$. Further, by the Markov property, the conditional expectation of the number births over an interval $[ t_1, t_2)$ given $\mb{Y}$ depends only on the states of the process at the endpoints of the interval: 
\begin{equation}\label{eq:markovconditional}
 \text{E} \left[ b_{p,t_2-t_1} \mid \mb{Y} \right] = \text{E} \left[ b_{p,t_2-t_1} | \mb{X}^p(t_1), \mb{X}^p(t_2) \right] = e^+_{\mb{X}^p(t_1), \mb{X}^p(t_2)}(t_2-t_1) = \frac{ m^+_{\mb{X}^p(t_1), \mb{X}^p(t_2)}(t_2-t_1)} { p_{\mb{X}^p(t_1), \mb{X}^p(t_2)}(t_2-t_1)}   , 
 \end{equation}
and the same is true for the other sufficient statistics. Therefore, for each process $p$,
\begin{align}
 \text{E}_{\widetilde{\boldsymbol\beta}} \left[ b_p \mid \mb{Y} \right] &= \sum_{i=1}^{n(p)}  e^+_{\mb{X}^p(t_{p,i-1}), \mb{X}^p(t_{p,i})}(t_{p,i-1}-t_{p,i}; \widetilde{\lambda}_p, \widetilde{\nu}_p, \widetilde{\mu}_p ), \nonumber \\
 \text{E}_{\widetilde{\boldsymbol\beta}} \left[ f_p \mid \mb{Y} \right] &= \sum_{i=1}^{n(p)}  e^\rightarrow_{\mb{X}^p(t_{p,i-1}), \mb{X}^p(t_{p,i})}(t_{p,i-1}-t_{p,i}; \widetilde{\lambda}_p, \widetilde{\nu}_p, \widetilde{\mu}_p ), \text{ and} \label{eq:suffstat}\\
 \text{E}_{\widetilde{\boldsymbol\beta}} \left[ d_p \mid \mb{Y} \right] &= \sum_{i=1}^{n(p)}  e^-_{\mb{X}^p(t_{p,i-1}), \mb{X}^p(t_{p,i})}(t_{p,i-1}-t_{p,i}; \widetilde{\lambda}_p, \widetilde{\nu}_p, \widetilde{\mu}_p ), \nonumber
 \end{align}
with $  \log\left(\widetilde{\lambda}_p\right) = \widetilde{\boldsymbol\beta}^\lambda \cdot \mathbf{z_p},  \log\left(\widetilde{\nu}_p\right) = \widetilde{\boldsymbol\beta}^\nu \cdot \mathbf{z_p}, \log\left(\widetilde{\mu}_p\right) = \widetilde{\boldsymbol\beta}^\mu \cdot \mathbf{z_p} $ similarly to equation \eqref{eq:loglin}.
Finally, combining \eqref{eq:suffstat}, \eqref{eq:markovconditional}, and \eqref{eq:completell}, the expected complete-data log likelihood   up to a constant is equal to
\begin{align}
\text{E}_{\widetilde{\boldsymbol\beta}} \left[ \ell_c (\mb{X}, \boldsymbol\beta) \mid \mb{Y} \right]  \propto \sum_{p=1}^m \bigg\{ \sum_{j=1}^{n(p)}  & \bigg[ \frac{ m^+_{\mb{X}^p(t_{p,j-1}), \mb{X}^p(t_{p,j})}(t_{p,j}-t_{p,j-1}; \widetilde{\lambda}_p, \widetilde{\nu}_p, \widetilde{\mu}_p)} { p_{\mb{X}^p(t_{p,j-1}), \mb{X}^p(t_{p,j})}(t_{p,j}-t_{p,j-1}; \widetilde{\lambda}_p, \widetilde{\nu}_p, \widetilde{\mu}_p)} \log \lambda_p  \nonumber \\
 & + \frac{ m^\rightarrow_{\mb{X}^p(t_{p,j-1}), \mb{X}^p(t_{p,j})}(t_{p,j}-t_{p,j-1}; \widetilde{\lambda}_p, \widetilde{\nu}_p, \widetilde{\mu}_p)} { p_{\mb{X}^p(t_{p,j-1}), \mb{X}^p(t_{p,j})}(t_{p,j}-t_{p,j-1}; \widetilde{\lambda}_p, \widetilde{\nu}_p, \widetilde{\mu}_p)} \log \nu_p  \nonumber \\
& + \frac{ m^-_{\mb{X}^p(t_{p,j-1}), \mb{X}^p(t_{p,j})}(t_{p,j}-t_{p,j-1}; \widetilde{\lambda}_p, \widetilde{\nu}_p, \widetilde{\mu}_p)} { p_{\mb{X}^p(t_{p,j-1}), \mb{X}^p(t_{p,j})}(t_{p,j}-t_{p,j-1}; \widetilde{\lambda}_p, \widetilde{\nu}_p, \widetilde{\mu}_p)} \log \mu_p  \nonumber \\
& -  \frac{ m^*_{\mb{X}^p(t_{p,j-1}), \mb{X}^p(t_{p,j})}(t_{p,j}-t_{p,j-1}; \widetilde{\lambda}_p, \widetilde{\nu}_p, \widetilde{\mu}_p)} { p_{\mb{X}^p(t_{p,j-1}), \mb{X}^p(t_{p,j})}(t_{p,j}-t_{p,j-1}; \widetilde{\lambda}_p, \widetilde{\nu}_p, \widetilde{\mu}_p)} (\lambda_p + \mu_p + \nu_p) \bigg] \bigg\}. \label{eq:recapll}  
 \end{align}

\subsubsection{M-step} To complete an M-step, we use an efficient Newton-Raphson algorithm to maximize the expectation 
$g(\boldsymbol\beta) = E_{\widetilde{\boldsymbol\beta}} \left[ \ell_c (\mb{X}, \boldsymbol\beta)\mid \mb{Y}  \right] .$
Each Newton-Raphson step recursively updates parameters using the following equation:
\begin{equation}\label{eq:NR}
\boldsymbol\beta_{new} = \boldsymbol\beta_{cur} - \left[ \mathbf{H} g(\boldsymbol\beta_{cur}) \right] ^{-1} \nabla g(\boldsymbol\beta_{cur}), \end{equation}
where $\nabla g$ denotes the gradient vector and $\mathbf{H} g$ denotes the Hessian matrix of $g(\boldsymbol\beta)$. 
Fortunately, compact analytical forms for these quantities are available. First, we collect complete data sufficient statistics across processes into the following vectors:
\begin{align*}
\mb{U}^T &= \left( E_{\widetilde{\boldsymbol\beta}} \left[ b_{1, t_{1, n(1)}} | \mb{Y} \right], \ldots,  E_{\widetilde{\boldsymbol\beta}} \left[ b_{m, t_{m, n(m)}} | \mb{Y} \right] \right), \quad
\mb{V}^T = \left( E_{\widetilde{\boldsymbol\beta}} \left[ f_{1, t_{1, n(1)}} | \mb{Y} \right], \ldots,  E_{\widetilde{\boldsymbol\beta}} \left[ f_{m, t_{m, n(m)}} | \mb{Y} \right] \right),\\
\mb{D}^T &= \left( E_{\widetilde{\boldsymbol\beta}} \left[ d_{1, t_{1, n(1)}} | \mb{Y} \right], \ldots,  E_{\widetilde{\boldsymbol\beta}} \left[ d_{m, t_{m, n(m)}} | \mb{Y} \right] \right), \quad
\mb{P}^T = \left( E_{\widetilde{\boldsymbol\beta}} \left[ R_{1, t_{1, n(1)}} | \mb{Y} \right], \ldots,  E_{\widetilde{\boldsymbol\beta}} \left[ R_{m, t_{m, n(m)}} | \mb{Y} \right] \right).
\end{align*}
If we aggregate covariate vectors for each process in a $c \times p$ matrix $\mb{Z} = (\mb{z_1}, \ldots, \mb{z_m})$ and process-specific rates into vectors $\boldsymbol\lambda = (\lambda_1, \ldots, \lambda_m), \boldsymbol\nu = (\nu_1, \ldots, \nu_m), \boldsymbol\mu = (\mu_1, \ldots, \mu_m)$, then the gradient and Hessian can be expressed as
\begin{align}\label{eq:grad}
\nabla g(\boldsymbol\beta) &= ( -\mb{Z}^T \left[ \text{diag}(\bf{P}) \boldsymbol\lambda + \bf{U} \right], -\mb{Z}^T \left[ \text{diag}(\bf{P}) \boldsymbol\nu + \bf{V} \right], -\mb{Z}^T \left[ \text{diag}(\bf{P}) \boldsymbol\mu + \bf{D} \right]),\\
\label{eq:hess}
\bf{H} g(\boldsymbol\beta) &= \left( \begin{array}{ccc}
-\mb{Z}^T \text{diag}(\bf{P}) \text{diag} (\boldsymbol\lambda) \mb{Z} & \mb{0} & \mb{0} \\
\mb{0} & -\mb{Z}^T \text{diag}(\bf{P}) \text{diag} (\boldsymbol\nu) \mb{Z} & \mb{0}  \\
\mb{0} & \mb{0} & -\mb{Z}^T \text{diag}(\bf{P}) \text{diag} (\boldsymbol\mu) \mb{Z} \end{array} \right).
\end{align}
The derivation of these expressions is parallel to those presented in \citep{doss2013}.
In our experience the M-step generally converges in fewer than ten Newton-Raphson steps.
Availability of closed form solutions \eqref{eq:grad} and \eqref{eq:hess} yields very
fast execution of each Newton-Raphson step, making the computational cost of the M-step negligible compared to the E-step. Note that computing the M-step using only one Newton-Raphson step rather than iterating until convergence is sufficient to guarantee the ascent property of the EM algorithm \citep{lange1995}, but we execute multiple steps because 
this strategy does not slow down our algorithm.

\subsubsection{Accelerating E-step calculations for intervals with no change}\label{sec:efficientEM}

In our birth-death-shift application, we may avoid the relatively costly E-step calculations for some intervals by approximating the probability of observing no changes with the probability that no event occurs in the underlying complete process. This approximation is not necessary in our algorithm, but can lead to gains in computational efficiency in settings such as our application where many intervals feature no observed changes.
\begin{figure}[t]
\centering

\includegraphics[width = .8\textwidth]{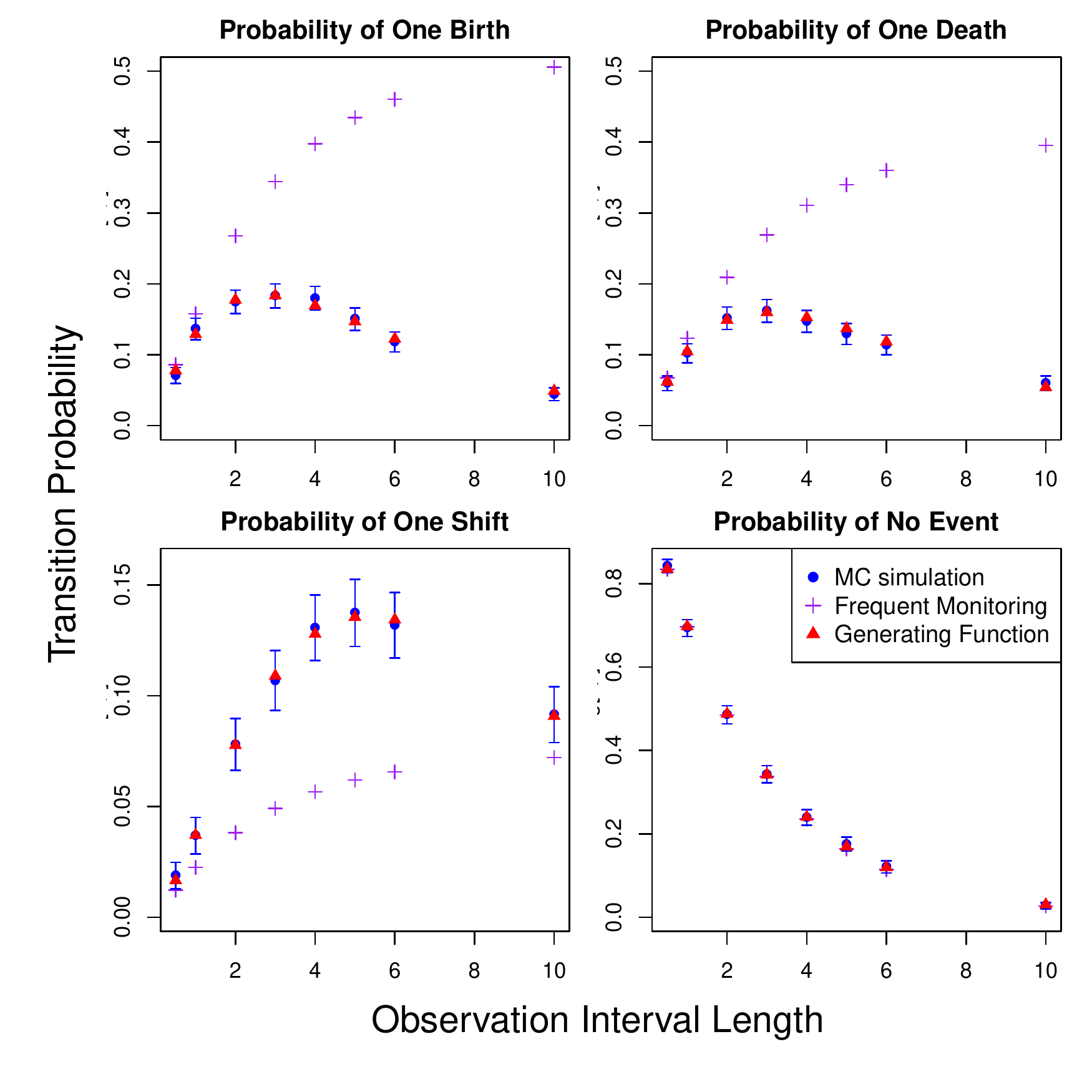}
\caption{\small Transition probability approximations. BDS 
transition probabilities are approximated with two methods --- the FM method, shown with magenta crosses, and the generating 
function  method, shown with red triangles. We depict Monte Carlo estimates of the BDS
transition probabilities with blue circles; vertical blue segments indicate their corresponding Monte Carlo confidence intervals.}
\label{fig:FMtrans}
\end{figure}

It is very unlikely that events occur in a time interval $[t_1, t_2)$ yet no change is observed so that $\mb{X}(t_1) = \mb{X}(t_2)$. For instance, if $12$ elements are present initially and a death followed by a birth occur, then we almost always observe $\mb{X}(t_1) = (12,0), \mb{X}(t_2) = (11,1)$ unless the element added by the birth occupies the \textit{exact} location that was previously occupied by the element that dies. This scenario would leave the observed state unchanged,  $\mb{X}(t_1) = \mb{X}(t_2) = (12,0)$, but has exceedingly low probability: the already small but non-negligible probability that more than one event occurs is then multiplied by $1/(S-11)$, the probability of the birth occurring in a specific location (recall $S$ is very large).
Therefore, it is numerically accurate to treat intervals with no observed changes as if no changes in the latent continuous-time process occur. In this case, the transition probability is easily calculated, given by the tail of an exponential distribution 
\begin{equation}\label{eq:tailexp}
 p_{(12,0),(12,0)}(t_2-t_1) = e^{- 12 (\lambda + \mu + \nu) (t_2-t_1)}. 
 \end{equation}
 
In addition to efficient closed-form transition probability calculation, the expected sufficient statistics necessary for the E-step are known in this setting. If no events occur, we know that $e^+_{(k,0),(k,0)}(t) = e^\rightarrow_{(k,0),(k,0)}(t) = e^-_{(k,0),(k,0)}(t) = 0$, and that the expected particle time is $e^*_{(k,0),(k,0)}(t) = k t$. This is not only faster computationally but also more numerically stable, avoiding the division of numerically calculated restricted moments by numerically calculated transition probabilities.
We verify this efficient implementation in our simulation studies, as illustrated in Figure \ref{fig:loglike}.

Equation \eqref{eq:tailexp} is the same formula that \citet{rosenberg2003} used to compute the probability of no event under their frequent monitoring (FM) method, but it is important to note that this approximation is \textit{not} accurate for the other probabilities under FM;  see Figure \ref{fig:FMtrans}. Continuing our previous example, consider the probability of transitioning from $(12,0)$ to $(11,1)$. Under the FM approximation, such a transition can only happen through a shift event with probability 
$ p_{(12,0),(11,1)}(t_2-t_1) = (\nu)/(\lambda+\nu+\mu)e^{-k (\lambda + \mu + \nu) (t_2 - t_1)}$. However, as we have discussed above, a birth followed by death also leads to observing $\mb{X}(t_2) = (11,1)$ in almost all cases. FM assigns zero probability
density to such event histories, while our method does not ignore non-negligible contribution 
of these trajectories to $ p_{(12,0),(11,1)}(t_2-t_1) $.
Additionally, probabilities $ p_{(j,k), (l,m)}(t_2-t_1)$ are set to zero under FM when for all $j,k,l,m$ where $| j-k | >1$ or $ | l-m | > 1$, preventing the use of all available data during inference. This is later illustrated in Figure \ref{fig:bias}.

\subsection{Implementation}
We implement our algorithm in the form of R package \texttt{bdsem}, available at 
\url{https://github.com/jasonxu90/bdsem}. The EM algorithm implementation relies on numerical solutions to differential equations in package \texttt{deSolve}, and accommodates panel data settings with unevenly spaced discrete observations. Our package also includes functions for MLE inference using other methods, as well as code for simulating from the BDS process. The software is accompanied by a vignette that steps through simplified versions of all simulation studies included in this paper.
 
\subsection{Comparison with frequent monitoring}
We begin with several simulation experiments assessing the validity of our algorithms.
The first simulation study checks whether transition probabilities calculated using our generating function method for the two-type branching process as described in \eqref{eq:FFT} coincide with those of the BDS model. We compare these computations to Monte Carlo estimates of these probabilities obtained from simulated trajectories from the birth-death-shift model, and also include a comparison to the FM method presented in \citep{rosenberg2003}. 

The FM model allows at most one event to occur per interval. Thus, over an observation interval $[t_i, t_{i+1})$ beginning with $k$ particles, the probabilities of a birth, death, and shift have closed forms $(\lambda/\theta) e^{-k \theta (t_{i+1} - t_i)}$, $(\mu/\theta) e^{-k \theta (t_{i+1} - t_i)}$, and $(\nu/\theta) e^{-k \theta (t_{i+1} - t_i)}$ respectively, where $\theta = \lambda + \nu + \mu$. The probability of no event occurring is given by $e^{-k \theta (t_{i+1} - t_i)}$, and all other transition probabilities are zero under the FM assumption. Because it becomes more likely that multiple events occur as the length of time between observations, $dt$, increases, we expect probabilities computed under FM to diverge substantially from the Monte Carlo estimates as we increase $dt$.
\par
We compute Monte Carlo approximations of transition probabilities from 2000 realizations of a BDS process without covariates, with rates  $\lambda = 0.0188, \mu = 0.0147, \nu = 0.00268$. These rates are equal to estimates of a transposable element birth, death, and shift rates obtained by \citet{rosenberg2003} using the FM method. We begin each simulation with an initial population size of $10$, and record the state of the process after simulating for $dt$ units of time, varying $dt$ from $0.5$ to $10$. The approximate transition probability $\hat{p}_{(10,0),(k,l)}(t)$ is then empirically computed by dividing the number of realizations ending in state $\mb{X}(t) = (k,l)$ by the total number of simulated processes.
\par
In Figure \ref{fig:FMtrans}, we see that as the length of an observation interval increases, FM approximations become inaccurate, while those obtained using our method remain within the narrow Monte Carlo confidence intervals. However, notice that the probability that no event occurs remains accurate even under the FM approximation, supporting the efficient implementation of our EM algorithm described in Section \ref{sec:efficientEM}.
Figure C-1 in the Appendix C demonstrates that our method also reliably calculates other transition probabilities that are set to 0 by the FM method, and these computations remain accurate as we vary the rates of the process.

Further, the discrepancies in numerical transition probabilities between methods indeed translate to differences in estimated rates. To see this, we generate a partially observed dataset and infer rates using both methods. We simulate from the BDS process with parameters $\lambda = 0.07, \mu = 0.12, \nu = 0.02$ to resemble the dynamics of the real dataset we will analyze in the next section, and record 200 discretely observed states of the process evenly spaced $dt$ time units apart. Each simulated interval begins with an initial population size drawn uniformly between $1$ and $15$, and this data generating process is repeated three times, producing three datasets corresponding to inter-observation intervals of lengths $dt = (0.2,0.4,0.6)$. 
We infer the MLE rates for each of the three discretely observed datasets using the generating function method and under the frequent modeling assumption. This entire procedure is then repeated over 200 trials. 

In the top row of Figure \ref{fig:bias}, we see that our generating function approach successfully recovers the MLE estimates, and coverage of $95\%$ confidence intervals remains close to 0.95 as we increase the length of time intervals between observations. The FM method performs  somewhat reasonably for shorter observation intervals, but the bias in these approximate MLEs becomes stark as $dt$ increases, with $95\%$ confidence interval coverage probability dropping as low as $0.24$.
\begin{figure}[t]
\centering
\noindent \includegraphics[width=0.85\textwidth]{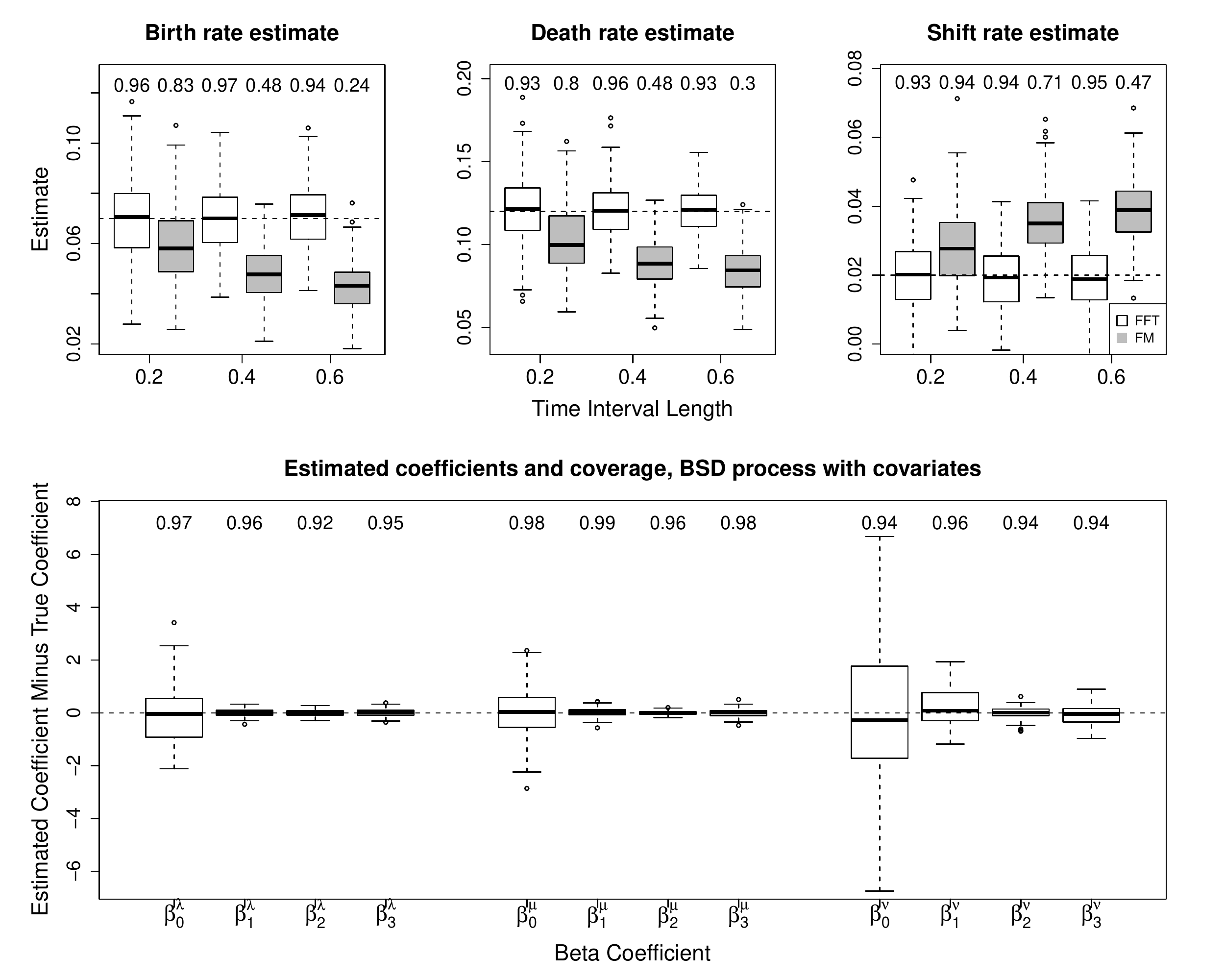}
\vspace{-0.4cm}
\caption{\small MLE parameter estimates on simulated data. The top row displays estimates of global birth, death, and shift rates in the simple BDS for three datasets, each with observation interval lengths $dt = (0.2,0.4,0.6)$. True parameter values used to initialize simulations marked by horizontal dashed line, and results using the FM method are included in gray. Monte Carlo coverage probabilities for $95\%$ confidence intervals are displayed above box plots. The bottom row displays estimated coefficients using EM in the BDS process with covariates, shifted by true values. }
\label{fig:bias}
\end{figure}

Similarly to this transition probability experiment, we check the accuracy of restricted moment computations via simulation by verifying the equality
\[ \text{E}(N_t^+ \mid X_0 = i,j) = \sum_{k,l} \text{E}(N_t^+ , 1_{x_t = kl} \mid x_0 = i,j) \]
for expected births, and analogous expressions for other expected sufficient statistics. The left hand side is empirically approximated by a Monte Carlo average of the number of births over many realizations of the process, while the restricted moments on the right hand side are the quantities computed via our generating function approach (see Appendix C, Figure C-2).

\subsection{Estimation of parameters in BDS model with covariates}

With accurate transition probabilities and restricted moments in hand, we are ready to infer coefficients in the BDS model with covariate-dependent rates using the EM algorithm. We again begin by generating simulated data to resemble the  dataset analyzed in the next section. The simulated dataset consists of observations corresponding to 100 ``patients", each with three covariates  $z_{p,1}, z_{p,2}, z_{p,3} \sim \text{Unif} \{ (0,2) \times (6,10) \times (4,6) \}$. We then simulate patient-specific BDS processes, beginning with rates $\lambda_p, \nu_p, \mu_p$ log-linearly related to a true vector of coefficients $\boldsymbol\beta$ as defined in \eqref{eq:loglin}. We collect between 2 and 7 observations per patient, each spaced $dt=0.4$ apart. Each simulated observation interval begins with an initial number of particles uniformly drawn between $2$ and $14$. Finally, we set true values of the effect sizes:
 $\boldsymbol{\beta}^\lambda = \left[ \log(7.5), \log(0.5), \log(0.3), \log(3) \right]$,
 $\boldsymbol{\beta}^\nu = \left[ \log(0.5), \log(8), \log(0.5), \log(0.9) \right]$,
 $\boldsymbol{\beta}^{\mu} = \left[ \log(4), \log(0.3), \log(0.8), \log(0.9) \right]$,
chosen so that averaging over patients, the overall birth, shift, and death rates of the process are similar to previous studies \citep{rosenberg2003, doss2013}.


 The EM algorithm is initialized with $\boldsymbol\beta_0 \sim N(\boldsymbol\beta, \text{diag}(0.5 \boldsymbol\beta))$, and the entire procedure of generating the dataset and inferring rates via EM is repeated 150 times. In the bottom row of Figure \ref{fig:bias}, we see that the MLEs are again unbiased estimates of the true values, with corresponding confidence interval coverage staying close to $95\%$.

Having verified that our EM algorithm successfully recovers the true parameters, we turn to a performance comparison with generic optimization via the Nelder-Mead (NM) algorithm implemented in the \texttt{optim} package \citep{nelder1965}. We choose NM as the method for comparison as it proved to be the most robust among the methods available via the \texttt{optim} function in R; a similar choice of NM for comparison to EM implementations is motivated in \citep{lange2013}. In this experiment, we generate one dataset as described in the procedure above from the BDS model with covariates. Fixing these data, we initialize each method with identical initial parameter values and convergence criteria, using a relative tolerance of $\epsilon = 1 \times 10^{-6}$, and repeat this procedure over $100$ sets of initial conditions.
\begin{figure}
    \centering
    \subfloat{{\includegraphics[width=.33\paperwidth]{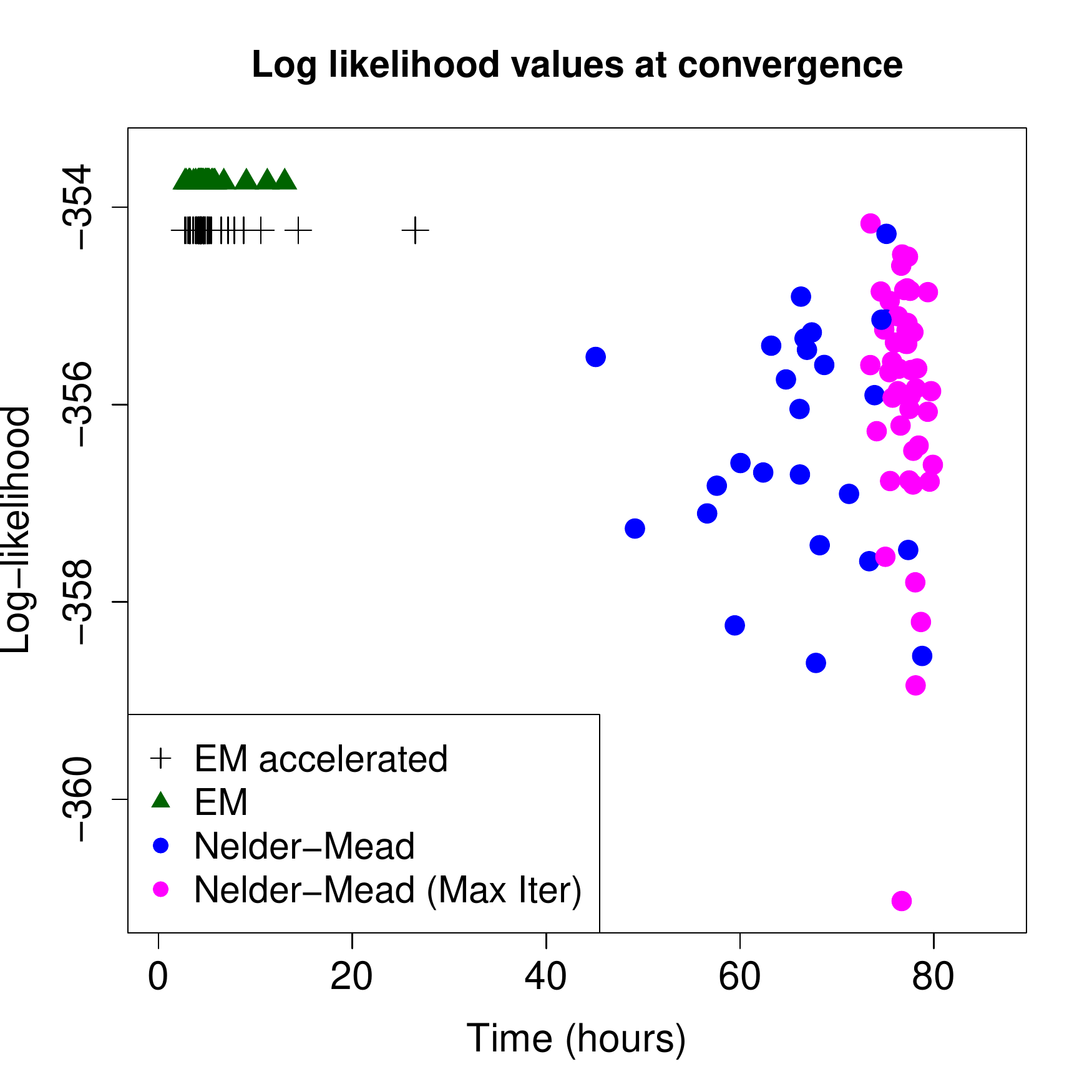} }}
    \subfloat{{\includegraphics[width=.33\paperwidth]{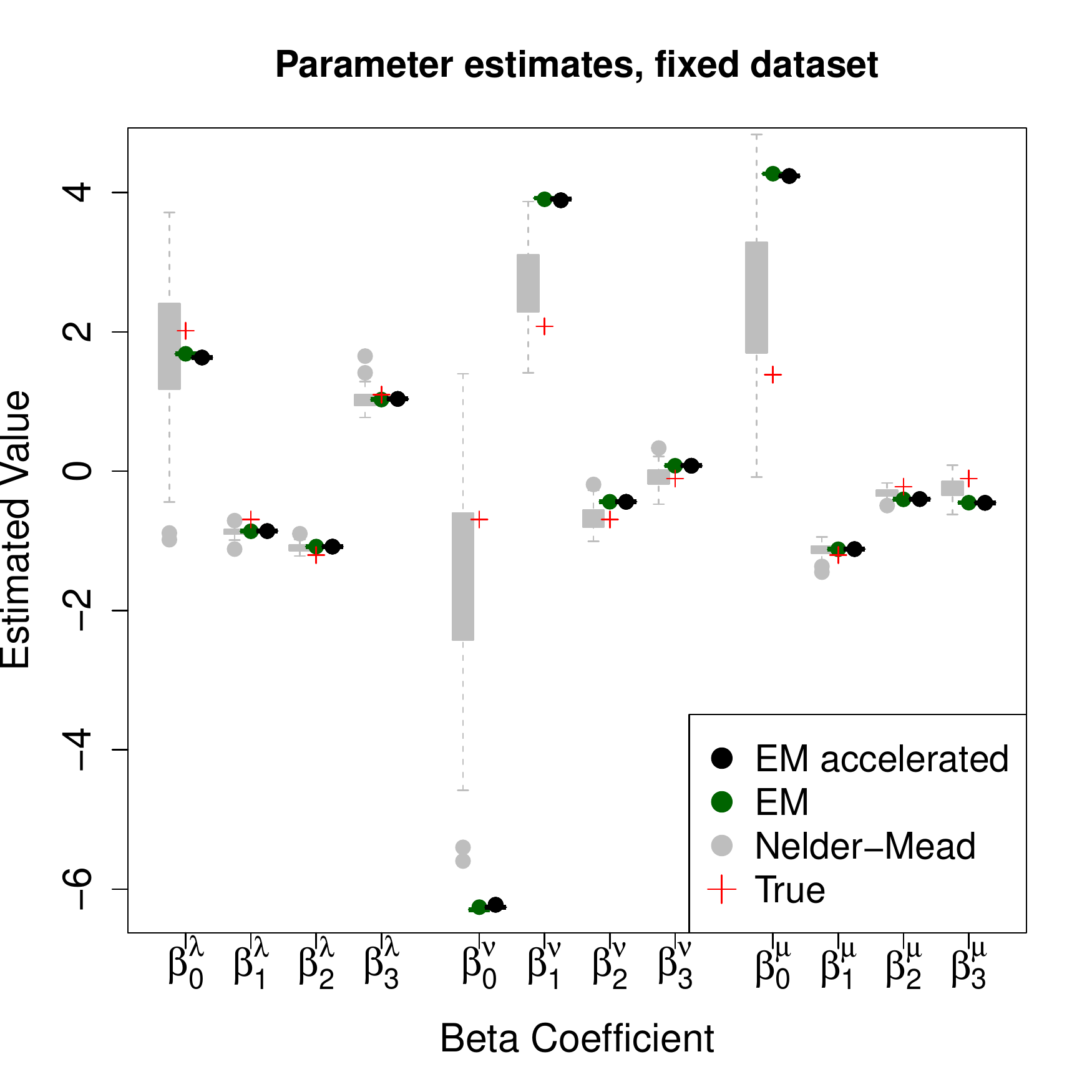} }}
    \caption{\small The left plot shows converged log-likelihood values using EM, accelerated EM, and Nelder-Mead optimization. The right plot shows parameter estimates produced by the EM, accelerated EM, and Nelder-Mead algorithms, with true parameters values shown as red crosses.}%
    \label{fig:loglike}
\end{figure}

Figure \ref{fig:loglike} displays the log-likelihood values achieved by each algorithm at convergence, as well as values in which Nelder-Mead terminated at an iteration limit set at 2000 steps. We see that in every case, the EM algorithm is significantly faster and finds a better optimum than NM. Further, the wide range of converged log-likelihood values suggests that NM is sensitive to initial conditions --- an undesirable feature in this fixed data setting. 
We also verify that accelerating our EM algorithm according to Section \ref{sec:efficientEM} does not affect the log-likelihood value at convergence up to numerical precision, with a total difference in log-likelihood less than $0.5$ accumulated over more than 400 observation intervals. Finally, we note that the comparison between EM and accelerated EM here is included to illustrate that they arrive at the same log-likelihood value and estimates at convergence. The increase in efficiency is not seen here: in these simulated examples, the generating function computations are always performed and cached at each iteration, rather than bypassed for candidate intervals described in  Section \ref{sec:efficientEM}. In our application to the real dataset in the next section, we find that accelerating EM runs approximately six times as fast as its nonaccelerated counterpart.
\par
Our EM approach is not only more stable in terms of the maximized log-likelihood, but also in terms of parameter estimates. The right panel of Figure \ref{fig:loglike} shows that estimates for each coefficient differ by no more than $0.01$ across disparate initial 
conditions under both EM implementations, while a range of estimates are produced by the Nelder-Mead algorithm. 
\par
Notice that for some coefficients, estimates produced by NM appear to lie closer to the ``true" parameters used to generate the synthetic data. We believe this to be an artifact of centering initial parameter values for both algorithms around the true parameters.  Indeed, MLEs corresponding to the likelihood surface of a given \textit{fixed} dataset generally do not exactly coincide with the ``true" parameters used to simulate the data. The fact that EM consistently finds a better optimum in terms of log-likelihood demonstrates that this is the case.


\subsection{\textit{Mycobacterium tuberculosis} transposable element evolution}
We apply our EM algorithm to infer covariate-dependent birth, death, and shift rates of the \textit{M. tuberculosis} transposon IS\textit{6110}, a frequently used marker to track \textit{M. tuberculosis} in the community \citep{mcevoy2007}. 
The marker serves as a DNA fingerprint, and in community-based studies patients that share the same or similar \textit{M. tuberculosis} genotypes are considered as part of the same transmission chain
\citep{van1993, kato2011}. However, such inference relies on a fairly precise understanding of within-host evolutionary dynamics: for instance, if a DNA marker changes very rapidly, isolates from the same source will be strongly differentiated, and the severity of outbreaks would be underestimated without accounting for the high change rate. Understanding the rates of change of IS\textit{6110}-based genotypes is thus critical toward the interpretation and design of such studies \citep{tanaka2001}, which in turn provide important information toward designing policy decisions such as control and intervention programs.

\begin{figure}[t]
    \centering
    \subfloat{{\includegraphics[width=.35\paperwidth]{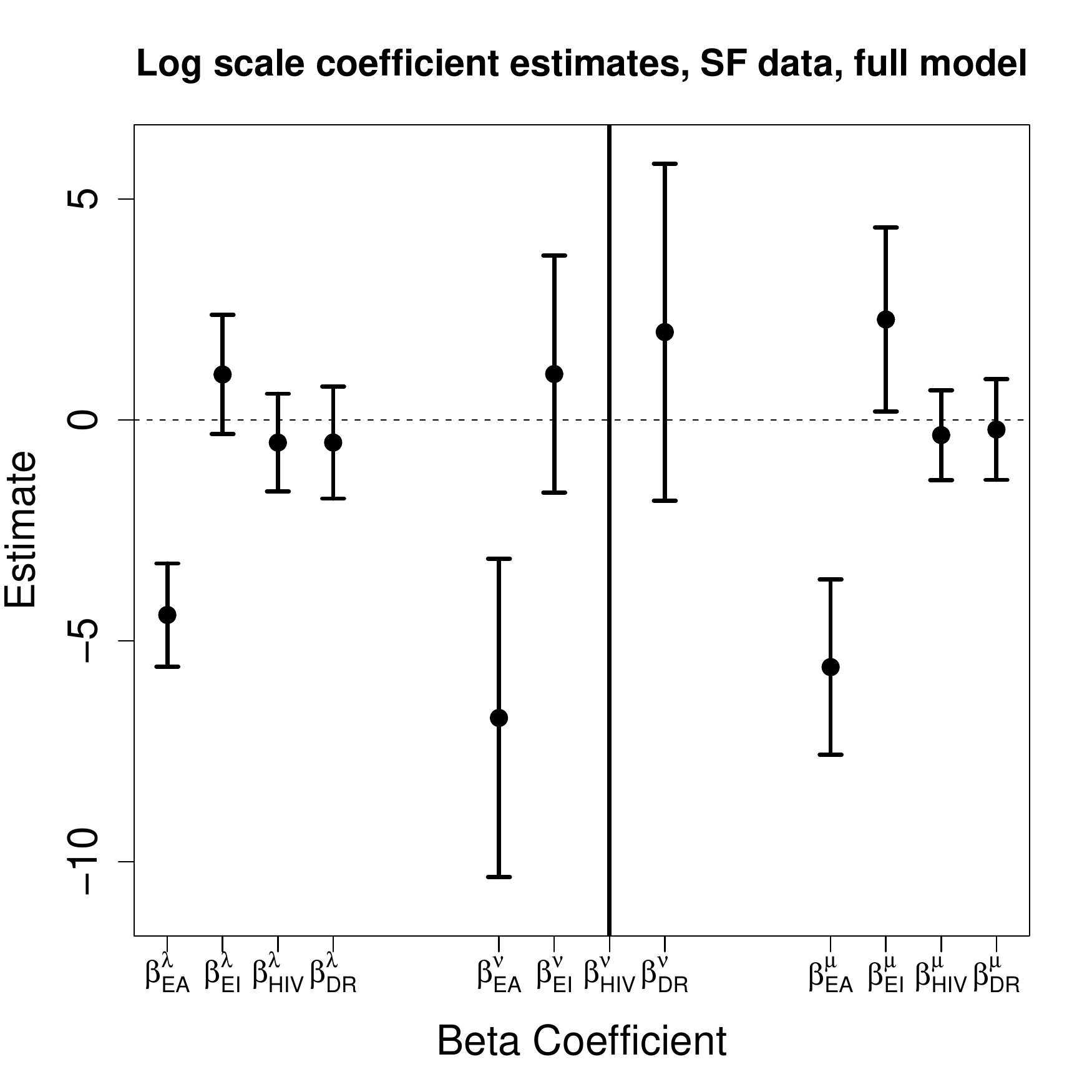} }}%
    \subfloat{{\includegraphics[width=.35\paperwidth]{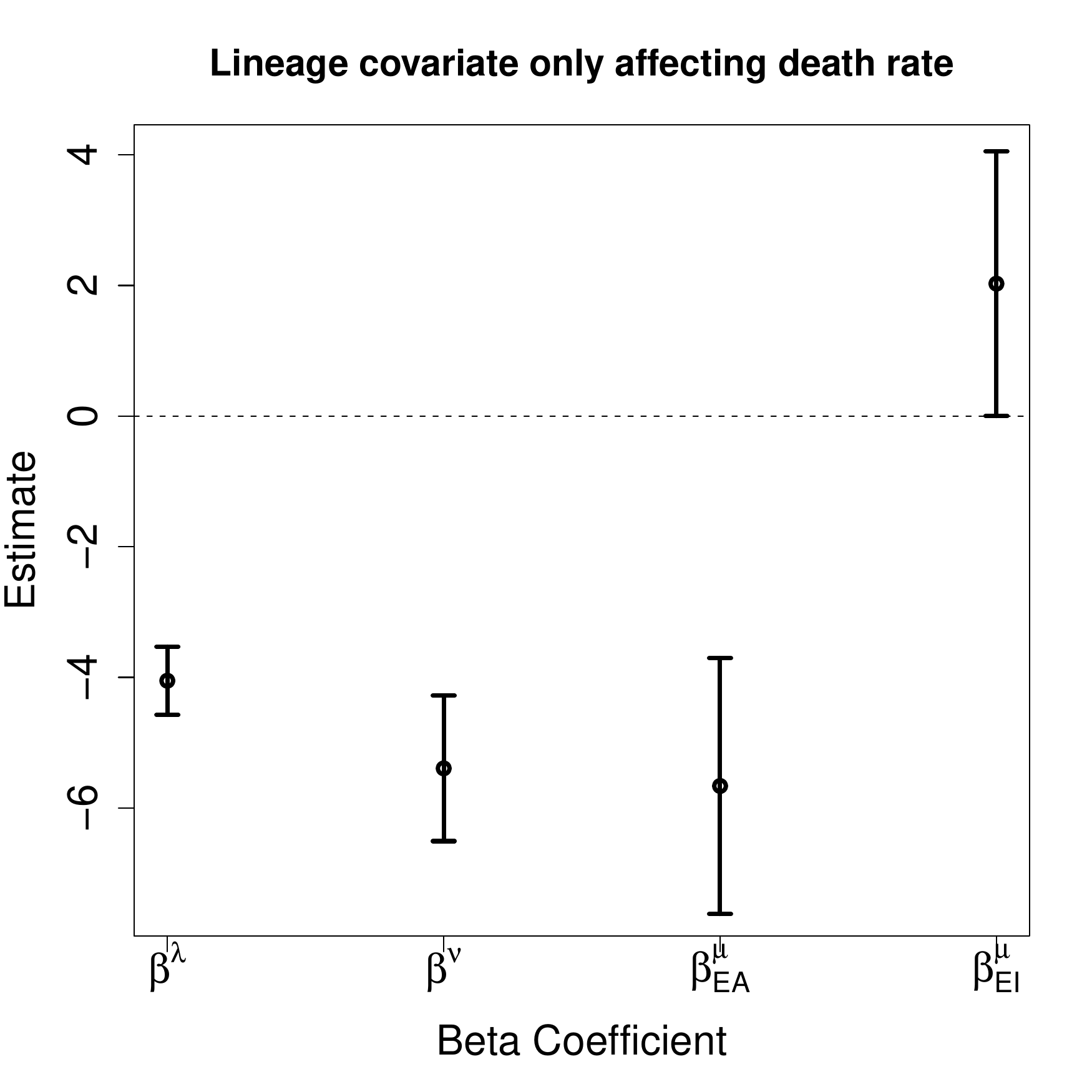} }}%
    \caption{\small Coefficient estimates and $95\%$ confidence intervals in full model and best model according to BIC. Notice intervals corresponding to $\beta_{EI}^\mu$ do not contain $0$.}%
    \label{fig:SF}%
\end{figure}

We analyze data from an ongoing study of the transmission and pathogenesis of \textit{M. tuberculosis} patients in a community study in San Francisco \citep{cattamanchi2006, suwanpimolkul2013}. The database includes all culture positive tuberculosis cases reported to the San Francisco Department of Public Health.  We included patients with more than one \textit{M. tuberculosis} isolate from specimens sampled more than 10 days apart, genotyped with IS\textit{6110} restriction fragment length polymorphism (RFLP) analysis. We assume that changes in the bands marking RFLP patterns evolve according to a linear birth-death-shift process, and assume that patients are not reinfected with a new strain between observations. Our dataset contains 252 observation intervals corresponding to 196 unique patients
observed at 452 time points. Average time between sampling times is $0.35$ years, with the longest interval being $2.35$ years. Of the $252$ intervals, $29$ feature end points with distinct genotypes.
\par
This dataset was analyzed by \cite{rosenberg2003} under the FM assumption, but these authors necessarily discarded all intervals with more than one change in RFLP bands, as these intervals with ``complex changes" are not possible under their restricted model. A later investigation by \cite{doss2013} relaxes this assumption, allowing for multiple births or deaths to occur, but ignores RFLP band locations entirely, working instead only with total copy numbers evolving under a linear birth-death process. 
Under this birth-death model,  the shift rate becomes unidentifiable, and the study instead infers covariate effects of birth and death rates. Our new method allows for a more principled, complete analysis, utilizing the full dataset without compromising any original modeling assumptions.
\par
We begin by applying our EM algorithm to the simple BDS model with a single birth, death, and shift rate of IS\textit{6110} for all patients. We estimate the MLE rates $\hat\lambda = 0.0156$, $\hat\nu = 0.00426$, $\hat\mu = 0.0187$, with associated $95\%$ confidence intervals $(0.00929, 0.0251)$, $(0.00145, 0.0125)$, and $(0.0177,0.0301)$ respectively. Starting the algorithm from a range of initial parameter values did not affect these results. These estimates are interpretable as the change rate of IS\textit{6110} per copy, per year, and our results are consistent with previous estimates in the literature: for all rates, confidence intervals overlap those obtained in the frequent monitoring approach in \citep{rosenberg2003} as well as those obtained in the BD model \citep{doss2013}. Similarly to \cite{doss2013} which estimates $\mu = 0.0207$, we find that our estimate of death rate $\mu$ is higher when allowing for multiple events between observations, compared to $\mu = 0.0147$ obtained under the FM assumption. This is to be expected, as there are three intervals in which IS\textit{6110} count drops by more than 1 in the dataset. Although confidence intervals overlap, our estimate of the shift rate is noticeably higher than the previous finding $\nu = 0.00268$ under FM, with the upper end of our confidence interval almost twice as large as the upper end of the $95\%$ FM confidence interval $[0, 0.00654)$. Again, our analysis allows inclusion of several intervals that can be explained by at least two genotype changes that were either omitted in earlier studies or interpreted as a single birth event. Our EM algorithm approach is the first method to our knowledge that is able to accurately estimate the shift rate and produce reliable confidence intervals in the BDS model.
\par
In addition to estimating the BDS rates globally, \citet{doss2013} investigated rates as functions of several covariates in a panel data setting, and their findings in the birth-death framework suggest that \textit{M. tuberculosis} lineage \citep{gagneux2006} may have a statistically significant effect on the rates of the process. 
We reexamine the effect of lineage on the rates in the full BDS model, considering 109 patients infected with Euro-American (EU) lineage strains, 54 patients with East-Asian (EA) strains, and 25 patients with Indo-Oceanic (IO) strains. We combine EU and IO lineages, because \cite{doss2013} found that the number of IO samples was not sufficient to recover rates for this
lineage.
Following \cite{doss2013}, we also include HIV infection status of each patient (HIV) and drug resistance status of the \textit{M. Tuberculosis} strain (DR). These attributes are coded as binary covariates: $\text{EI}_p = 1$ if patient $p$ is infected with the EU or IO strain and 0 otherwise, 
so that intercept terms $\beta^\lambda_0, \beta^\mu_0, \beta^\nu_0$ correspond to the EA strain. The variable $\text{HIV}_p = 1$ if patient $p$ is infected with HIV and 0 otherwise, and $\text{DR}_p= 1$ if patient $p$ is infected with a drug-resistant strain, and 0 otherwise.
Covariates are log-linearly related to birth, death, and shift rates:
$\log \lambda_p = \beta^\lambda_0 + \beta^\lambda_1 \text{EI}_p  + \beta^\lambda_2 \text{HIV}_p + \beta^\lambda_3 \text{DR}_p$,
$\log \mu_p = \beta^\mu_0 + \beta^\mu_1 \text{EI}_p  + \beta^\mu_2 \text{HIV}_p + \beta^\mu_3 \text{DR}_p$,
$\log \nu_p = \beta^\nu_0 + \beta^\nu_1 \text{EI}_p  + \beta^\nu_2 \text{HIV}_p + \beta^\nu_3 \text{DR}_p$.  

We estimate coefficients in the full log-linear model described above, as well as in several simpler models, using the EM algorithm. The simpler models differ from the full model by either excluding the HIV and DR covariates, or excluding all covariates for specified global or ``simple" rates. For instance, the model labeled ``Lineage only, simple $\nu$" in Table \ref{tab:BIC} has five parameters $\boldsymbol\beta = ( \beta^\lambda_0, \beta^\lambda_1,  \beta^\mu_0, \beta^\mu_1, \beta^\nu)$, and rates defined as
\[ \log \lambda_p = \beta^\lambda_0 + \beta^\lambda_1 \text{EI}_p, \hspace{15pt} \log \nu_p = \log \nu = \beta^\nu, \hspace{15pt} \log \mu_p = \beta^\mu_0 + \beta^\mu_1 \text{EI}_p. \]
In all cases, estimates obtained using the accelerated EM algorithm and regular implementation coincide, and neither is sensitive to initial conditions.
A summary and model comparison via the Bayesian Information Criterion (BIC) \citep{schwarz1978} is included in Table \ref{tab:BIC}, which selects the model including only the lineage covariate for modeling death rate $\mu$.  Coefficient estimates are displayed graphically for the full model as well as the best model selected by BIC in Figure \ref{fig:SF}. While we choose not to report coefficient estimates from each model for brevity, in \textit{all} models, the confidence interval for $\beta_{EI}^\mu$ does not contain zero, indicating that strain lineage has a statistically significant effect on the death rate. The estimate $\hat{\beta_{EI}^\mu} = 2.028$ under the best model indicates that in Euro-American and Indo-Oceanic lineages 
loss of IS\textit{6110} element occurs $\exp(2.028) =7.599$ times faster than in their East-Asian counterpart.
Our analysis affirms the result suggested by \cite{doss2013} in the simpler BD framework: \textit{M. tuberculosis} lineage needs to be taken into consideration when studying disease transmission using IS\textit{6110} genotypes.

\begin{table}
\begin{center}
\begin{tabular}{rrrr}
 \hline
Model & \# Params &  Log-likelihood & BIC \\ 
\hline
Full, separate EU, IO lineages & 15 & -119.845 & 330.01 \\
Full & 12 & -120.498 & 313.25 \\
Full, simple $\nu$ & 9 & -122.455 & 299.10 \\
Lineage covariate only & 6 &  -123.649 & 293.42 \\
Lineage only, simple $\nu$ & 5 & -123.717 & 277.54 \\
\textbf{Lineage only, simple $\boldsymbol\lambda, \boldsymbol\nu$} & \textbf{4} & \textbf{-124.472} & \textbf{273.02} \\
Simple $\lambda, \nu, \mu$ & 3 & -127.914 & 273.90 \\
\hline
\end{tabular}
\caption{ \small Model comparison via BIC $ \approx -2 \hat{L} + k  \ln n $. We also fit the log-linear model in \citet{doss2013}, which includes separate indicator variables for Euro-American and Indo-Oceanic lineages. Models described as ``lineage only" do not include HIV, DR covariates, and rates described as ``simple" are global to all patients, not influenced by covariates in the model. }
\label{tab:BIC}
\end{center}
\end{table}
\section{Discussion}

In this paper, we have developed an EM algorithm for inference in a discretely observed, multi-type branching process framework. We focus our attention on fitting BDS processes to panel data, driven by the problem of estimating evolutionary dynamics of IS\textit{6110} --- a genetic marker that plays an important role in DNA fingerprinting of \textit{M.\ tuberculosis}. Our method allows for birth, death, and shift log-rates to be linear combinations of many patient-specific covariates, and is flexible enough to capture the full range of dynamics between observation times by approximating the BDS process with a two-type branching process. To our knowledge, there is no other method of comparable accuracy for fitting BDS processes in this setting. 
\par
The generating functions we derive and numerical techniques that use these functions to calculate previously unavailable transition probabilities and restricted moments are helpful tools toward probabilistic characterization of such processes more generally. We demonstrate how our generating function approach leads to maximum likelihood estimation and evaluation of expected complete-data log-likelihood within an EM algorithm, but note that these calculations also arise in a variety of other statistical techniques for prediction and estimation. For example, availability and tractability of the likelihood via our methods allows for their use in Bayesian inference.
\par
Several problems associated with our numerical methods remain open. First, although we have empirical
evidence that our branching process approximation to the discretely observed BDS likelihood is very accurate, rigorous characterization of this approximation is 
lacking. Filling this theoretical gap is an interesting avenue for future research. Second, our method has potential numerical limitations in settings with high population counts.  Computing transition probabilities to population sizes up to $N$ of any particle type typically requires $N^p$ differential equations to be solved, where $p$ is the number of particle types. Although efficient numerical solvers are available and each ODE evaluation can be accomplished in only fractions of a second, requiring millions of evaluations becomes prohibitive, especially within an iterative algorithm. However, because the support of transition probabilities is often concentrated unless observation intervals are very long, future work may harness this sparsity to accelerate computations.
\par
We apply our method to analyze within-host evolution of the transposon IS\textit{6110}, an important marker in genetic fingerprinting of \textit{M. tuberculosis}. We obtain confidence intervals for global birth, death, and shift rates $\lambda, \mu, \nu$ that overlap with those obtained by \cite{rosenberg2003} and with those for $\lambda$ and $\mu$ obtained by \cite{doss2013}. Thus, our estimates are consistent with previous results.  While this suggests that the restrictive model assumptions in earlier approaches are not unreasonable in this application, we draw attention to our significantly higher estimate of $\nu$. Indeed, while the frequent monitoring study by \cite{rosenberg2003} suggests that the global shift rate $\nu$ is an order of magnitude smaller than the birth and death rates, our method reveals that the shift rate is in fact comparable to the baseline Euro-American death rate after accounting for strain lineage. This is clearly illustrated in Figure \ref{fig:SF}, and suggests that the non-negligible shift event should not be omitted from the model as it was in \citep{doss2013}. This novel observation was not possible using existing methodology --- our approach is the first to accurately estimate the birth, death, and shift rates as functions of covariates in this discretely monitored setting without compromising model assumptions.

Our covariate-specific rate analysis reaffirms previous indication in the simplified BD framework  that strain lineage has a significant effect on the death rate \citep{doss2013}, although the large confidence intervals suggest that this lineage effect is somewhat marginal. Indeed, more data would be required to be certain in the result, but our principled analysis is assuring in that any spurious result can now be attributed to limited, noisy data rather than to model misspecification. The possibility of differences in rates of genetic marker evolution across lineages is important in epidemiological studies. For example,  similar IS\textit{6110} genotypes across multiple individuals infected with EA lineage of \textit{M.\ tuberculosis} do not provide strong evidence of these individuals belonging to the same transmission chain, because of the slow change rate of IS\textit{6110} in the EA lineage. Failing to account for this may lead to inferring false relationships among genotypically similar clusters of patients. 
\par
The BDS model we consider is general enough so that our methods can be applied to studying evolution of any transposable element. Such studies are not limited to infectious disease surveillance, because studying evolution of transposable elements in
eukaryotes is also of great interest \citep{Biemont2010}.  Beyond the BDS framework,
the tools we develop for fitting branching processes are transferable to many settings. For example, our methodology is applicable to compartmental models, a class of well-known multi-type branching processes that finds applications in modeling cancerous growth, bacterial evolution, and cellular differentiation in systems such as hematopoiesis  \citep{gibson1998, golinelli2006}. 
%



{\small \bibliography{bds_paper}}

\setcounter{table}{0}
\renewcommand{\thetable}{C-\arabic{table}}
\renewcommand{\thefigure}{C-\arabic{figure}}

\renewcommand{\theequation}{A-\arabic{equation}}
\setcounter{equation}{0}

\section*{Appendix A} Here we derive and solve the Kolmogorov backward equations of the two-type branching process necessary for evaluating the generating functions whose coefficients yield transition probabilities. See \citep{bailey1990} for an exposition on this solution technique.

Our two-type branching process is represent by a vector $(X_1(t), X_2(t))$ that denotes the numbers of particles of two types at time $t$. Recall the quantities $a_1(k,l)$, the rates of producing $k$ type 1 particles and $l$ type 2 particles, starting with one type 1 particle, and $a_2(k,l)$, analogously defined but beginning with one type 2 particle. Then we may introduce respective pseudo-generating functions
$u_i(s_1,s_2) = \sum_k \sum_l a_i(k,l)s_1^k s_2^l$ for $i=1,2$, and the probability generating functions can be expressed
\begin{align}
\phi_{10}(t, s_1, s_2) &= E \left[s_1^{X_1(t)} s_2^{X_2(t)} \mid X_1(0) = 1, X_2(0) = 0\right] = \sum_{k=0}^\infty \sum_{l=0}^\infty P_{(1,0), (k,l)} (t) s_1^k s_2^l \nonumber \\
&= \sum_{k=0}^\infty \sum_{l=0}^\infty \left[ \mathbf{1}_{k=1, l = 0} + a_1(k,l) t + o(t) \right] s_1^k s_2^l = s_1 + u_1(s_1, s_2) t + o(t). \label{eq:pgf}
\end{align}
An analogous expression for $\phi_{01}(t, s_1, s_2)$ is obtained similarly. For short, we write $\phi_{10} := \phi_1, \phi_{01} := \phi_2$, and thus we have the following relations between $\phi$ and $u$
\[\frac{d \phi_1 (t, s_1, s_2)}{dt} \Big|_{t=0} = u_1(s_1, s_2), \hspace{20pt} \frac{d \phi_2 (t, s_1, s_2)}{dt} \Big|_{t=0} = u_2(s_1, s_2). \]
By particle independence, $\phi_{i,j} = \phi_1^i \phi_2^j$, so it suffices to work with only $\phi_1, \phi_2$.
We now derive the backward equations for $\phi_1$ and $\phi_2$. Chapman-Kolmogorov equations yield the symmetric relations
\begin{align}
\phi_1(t+h, s_1, s_2) &= \phi_1(t, \phi_1(h, s_1, s_2), \phi_2(h, s_1, s_2)) \label{eq:chap1}
\\ &= \phi_1(h, \phi_1(t, s_1, s_2), \phi_2(t, s_1, s_2)). \label{eq:chap2}
\end{align}
To derive the backward equations, we begin by expanding $\phi_1(t+h, s_1, s_2)$ around $t$ and applying \eqref{eq:chap2}:
\begin{align*}
\phi_1(t+h, s_1, s_2) &= \phi_1(t, s_1, s_2) + \frac{d \phi_1 (t+h, s_1, s_2)}{dh} \Big|_{h=0} h + o(h) 
\\ &= \phi_1(t, s_1, s_2) + \frac{d \phi_1(h, \phi_1(t, s_1, s_2), \phi_2(t, s_1, s_2))}{dh} \Big|_{h=0} h + o(h) 
\\ &= \phi_1(t,s_1,s_2) + u_1( \phi_1(t,s_1, s_2) , \phi_2(t, s_1, s_2)) h + o(h).
\end{align*}

Since an analogous argument applies for $\phi_2$, we arrive at the system
\[ \begin{cases}
\frac{d}{dt} \phi_1(t, s_1, s_2) = u_1( \phi_1(t, s_1, s_2), \phi_2(t, s_1, s_2) ), \\
\frac{d}{dt} \phi_2(t, s_1, s_2) = u_2( \phi_1(t, s_1, s_2), \phi_2(t, s_1, s_2) ), 
\end{cases} \]
subject to initial conditions $\phi_1(0, s_1, s_2) = s_1, \phi_2(0, s_1, s_2) = s_2$. 

We now substitute the rates specific to our birth-shift-death model into this general form: recall the rates defining the two-type branching process formulation presented in Section 2.4 of the main paper are
\begin{align}
a_1(1,1) &= \lambda, && a_1(0,1) = \nu, && a_1(0,0) = \mu && a_1(1,0) = -(\lambda + \nu + \mu),  \nonumber \\
a_2(0,2) &= \lambda, && a_2(0,1) = -(\lambda + \mu), && a_2(0,0) = \mu, \label{eq:rates}
\end{align}
so that the pseudo-generating functions and backward equations are 
\begin{equation}\label{eq:backward}
\begin{cases}
 u_1(s_1,s_2) = \lambda s_1 s_2 + \nu s_2 + \mu - (\lambda + \nu + \mu)s_1, & \hspace{20pt} \frac{d}{dt} \phi_1 = \lambda \phi_1 \phi_2 + \nu  \phi_2 + \mu - (\lambda + \nu + \mu) s_1,\\
u_2(s_1,s_2) = \lambda s_2^2 - (\lambda + \mu) s_2 + \mu,    &  \hspace{20pt} \frac{d}{dt} \phi_2 = \lambda \phi_2^2 - (\lambda + \mu) \phi_2 + \mu. 
\end{cases}
\end{equation}
Upon rearranging, the expression for $\phi_2$ becomes a Ricatti equation
$$\phi_2 ' - \lambda \phi_2^2 + (\lambda + \mu) \phi_2 = \mu,  $$
and the constant solutions $\phi_2 = 1, \mu/ \lambda $ are both particular solutions. Using the simpler root $\phi_2 = 1$, we can reduce the above Ricatti equation to a linear ODE by making a substitution $z = \frac{1}{\phi_2 -1}$, so that $\phi_2 = 1 + \frac{1}{z}$:
\begin{align*}
 \phi_2' &= -\frac{z'}{z^2} = \mu - (\lambda + \mu)(\frac{1}{z}+1) + \lambda(1 + \frac{1}{z})^2 &= \mu - \frac{\lambda + \mu}{z} - (\lambda + \mu) + \lambda( \frac{1}{z^2} + \frac{2}{z} + 1)
 \\ &= -\frac{\mu - \lambda}{z} + \frac{\lambda}{z^2}.
\end{align*}
Multiplying through by $-z^2$ and rearranging, we arrive at a linear equation that is easily solved via the integrating factor method:
$$ z' + (\lambda - \mu)z = -\lambda \Rightarrow z = -\frac{\lambda}{\lambda - \mu} + C e^{-(\lambda - \mu) t}. $$
Substituting $\phi_2$ back into the expression, we obtain
$$\phi_2 = 1 + \frac{1}{ \frac{\lambda}{\mu - \lambda} + Ce^{(\mu - \lambda)t}},$$
and plugging in the initial condition $\phi_2(0,s_1,s_2) = s_2$, we see $C = \frac{1}{s_2-1} + \frac{\lambda}{\lambda - \mu}$.
Thus, we arrive at the closed form solution
\begin{equation}
\phi_2(t,s_1,s_2) = 1 + \left[ \frac{\lambda}{\mu - \lambda} + (\frac{1}{s_2-1} + \frac{\lambda}{\lambda - \mu})e^{(\mu - \lambda)t} \right] ^{-1} := g(t,s_1, s_2)
\end{equation}
We can now plug this solution into the ODE for $\phi_1$ to obtain
\begin{equation}\label{eq:phi1}
\frac{d}{dt} \phi_1 + (\lambda + \nu + \mu - \lambda g) \phi_1 = \nu g + \mu .
\end{equation}

\subsection*{Closed form solution for $\phi_1$}
Equation \eqref{eq:phi1} is linear with variable coefficients, and can again be solved by multiplying by an integrating factor. If we define the integrating factor
%
$\psi := \exp\left[ \int{(\lambda + \nu + \mu - \lambda g) dt}\right]$, then 
$$\frac{d}{dt}(\phi_1 \psi) = \psi(\nu g + \mu), $$
and after integration and rearranging,
\begin{equation}\label{p1sol}
\phi_1 = \psi^{-1} \left[ \int{ \psi( \nu g + \mu ) dt} + C \right]. 
\end{equation}
After further simplification, we may write
$$\psi = e^{ (\nu + \mu) t} (\lambda s_2 - \mu) + \lambda e^{(\lambda + \nu)t}(1 - s_2), $$
and the integrand becomes
\begin{equation}\label{integrand}
\psi(\nu g + \mu) = (\nu + \mu) \psi + \frac{\nu \psi} { \frac{\lambda}{\mu - \lambda} + (\frac{1}{s_2-1} + \frac{\lambda}{\lambda - \mu} ) e^{(\mu - \lambda)t} } .
\end{equation}
Integrating \eqref{integrand} and plugging into \eqref{p1sol} with initial condition $\phi_1(0,s_1, s_2) = s_1$, we ultimately obtain a closed form expression

\begin{align}
\phi_1(t, s_1, s_2) &=    \left[  e^{(\nu  + \mu ) t} (\lambda  s_2 -\mu ) + \lambda  e^{(\lambda +\nu ) t} (1-s_2 ) \right]^{-1} \nonumber
\\ & \cdot \Bigg\{ \nu (\mu - \lambda) e^{\nu t} \biggl[ \frac{ e^{\mu t}( \lambda s_2 - \mu) {}_2 F_1 \Big( 1, \frac{\mu + \nu}{\mu - \lambda}, \frac{\lambda - 2\mu - \nu}{\lambda - \mu}, \frac{ e^{(\mu - \lambda)t}(\lambda s_2 - \mu)}{ \lambda(s_2 - 1)} \Big) }{\lambda(\mu+\nu)} \nonumber
\\ & + \frac{e^{\lambda t} (1 - s_2) {}_2 F_1\Big(1, \frac{\lambda + \nu}{\mu - \lambda}, \frac{\mu + \nu}{\mu - \lambda}, \frac{ e^{(\mu - \lambda)t}(\lambda s_2 - \mu)}{ \lambda(s_2 - 1)} \Big)}{\lambda + \nu} \biggl] + (\lambda  s_2 -\mu ) e^{(\mu +\nu ) t} \nonumber
\\& + \frac{\lambda  (\nu +\mu ) (1-s_2) e^{(\lambda +\nu ) t}}{\lambda +\nu } + \mu  + s_1  (\lambda -\mu ) - \lambda  s_2  + \frac{ \lambda  (s_2 -1) (\nu +\mu ) } {\lambda +\nu} \nonumber
\\ & + \nu(\lambda - \mu) \biggl[ \frac{\lambda s_2 - \mu}{\lambda (\mu + \nu)}  {}_2 F_1\Big( 1, \frac{\mu + \nu}{\mu - \lambda}, \frac{\lambda -2\mu - \nu}{\lambda - \mu}, \frac{\lambda s_2 - \mu}{\lambda(s_2-1)}\Big) \nonumber
\\ & + \frac{1 - s_2}{\lambda + \nu}  {}_2 F_1 \Big(1, \frac{\lambda + \nu}{\mu - \lambda}, \frac{\mu + \nu}{\mu - \lambda}, \frac{\lambda s_2 - \mu}{\lambda(s_2-1)} \Big) \biggl] \Bigg\}, \label{eq:closedform}
\end{align}
where ${}_2 F_1$ indicates the hypergeometric function. In practice, we solve for $\phi_1$ numerically rather than using this closed form solution: evaluating \eqref{eq:phi1} via Runge-Kutta methods proves more stable than numerical evaluation of the hypergeometric functions arising in \eqref{eq:closedform}.

\renewcommand{\theequation}{B-\arabic{equation}}
\setcounter{equation}{0}

\section*{Appendix B}
Here we derive the equations in our main theorem. The formulation is repeated below:

\begin{theorem}\label{thm:main}
Let $\{ X_t \}$ be a two-type branching defined by the rates in equation \eqref{eq:rates}. Denote particle time and the number of births, shifts, and deaths over the interval $[0,t)$ by $R_t, b_t, f_p$, and $d_t$ respectively. Define the generating functions corresponding to births as 
\begin{align*}
H^+_1(r,s_1, s_2, t) &= \text{E} \left[ r^{b_t}  s_1^{X_1(t)} s_2^{X_2(t)} \mid \mb{X}(0) = (1,0) \right] \text{ and}\\
H^+_2(r,s_1, s_2, t) &= \text{E} \left[ r^{b_t}  s_1^{X_1(t)} s_2^{X_2(t)} \mid \mb{X}(0) = (0,1) \right].
\end{align*}
Then
\[
H_2^+ = y_b + \left[  \frac{-\lambda r}{2 \lambda r y_b - \lambda - \mu} + \left( \frac{1}{s_2 - y_b} + \frac{\lambda r}{2 \lambda r y_b - \lambda - \mu} \right)  e^{-(2y_b \lambda r - \lambda - \mu)t} \right] ^{-1},
\]
where $y_b = (\lambda + \mu + \sqrt{ \lambda^2 + 2 \lambda \mu + \mu^2 - 4 \lambda \mu r})/(2 \lambda r)$, and $H_1^+$ satisfies the following differential equation:
\begin{equation}\label{eq:birthODE}
\frac{d}{dt} H_1^+(t,s_1,s_2,r) = \lambda r H_1^+ H_2^+ + \nu H_2^+ + \mu - (\lambda + \mu + \nu) H_1^+, 
\end{equation}
subject to initial condition $H_1(r,s_1,s_2,0) = s_1$.

The analogous generating functions for shifts, deaths, and particle time satisfy the following equations:
\begin{align*}
H_2^- (t,s_1,s_2,r) &= y_d + \left[  \frac{-\lambda }{2 \lambda y_d - \lambda - \mu} + \left( \frac{1}{s_2 - y_d} + \frac{\lambda }{2 \lambda y_d - \lambda - \mu} \right)  e^{-(2y_d \lambda  - \lambda - \mu)t} \right] ^{-1},\\
H_2^\rightarrow (t,s_1,s_2,r) &= 1 + \left[ \frac{\lambda}{\mu - \lambda} + (\frac{1}{s_2-1} + \frac{\lambda}{\lambda - \mu})e^{(\mu - \lambda)t} \right] ^{-1}, \\
H_2^*(t,s_1,s_2,r) &= y_* + \left[  \frac{-\lambda }{2 \lambda y_* - \lambda - \mu - r} + \left( \frac{1}{s_2 - y_*} + \frac{\lambda }{2 \lambda y_* - \lambda - \mu - r} \right)  e^{-(2y_* \lambda  - \lambda - \mu - r)t} \right] ^{-1} ,
\end{align*}
\begin{align*}
\frac{d}{dt} H_1^- (t,s_1,s_2,r) &=  \lambda H_1^- H_2^- + \nu H_2^- + \mu r - (\lambda + \mu + \nu) H_1^-, \\
\frac{d}{dt} H_1^\rightarrow (t,s_1,s_2,r) &=  \lambda H_1^\rightarrow H_2^\rightarrow + \nu r H_2^\rightarrow + \mu  - (\lambda + \mu + \nu) H_1^\rightarrow,  \\
\frac{d}{dt} H_1^* (t,s_1,s_2,r) &=  \lambda H_1^* H_2^* + \nu H_2^* + \mu - (\lambda + \mu + \nu + r) H_1^*,
\end{align*}
where $y_d = (\lambda + \mu + \sqrt{ \lambda^2 + 2 \lambda \mu + \mu^2 - 4 \lambda \mu r})/(2 \lambda)$, 
$y_* = (\lambda + \mu + r +  \sqrt{( \lambda + \mu + r)^2 - 4 \lambda \mu})/(2 \lambda)$ and 
$H_1^-(r,s_1,s_2,0) = H_1^\rightarrow(r,s_1,s_2,0) = H_1^*(r,s_1,s_2,0) = s_1$.
\end{theorem}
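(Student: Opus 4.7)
The plan is to derive each pair of generating functions $(H_1^\bullet, H_2^\bullet)$ in parallel to the derivation of $(\phi_1, \phi_2)$ in Appendix A, but augmented with a counter variable $r$ that tracks either the relevant event count or the particle-time integral. First I would compute the infinitesimal generator of each joint generating function by conditioning on the first event in a small interval $(0,h]$: starting from one type 1 particle, a birth (rate $\lambda$) moves the state to $(1,1)$ \emph{and} increments $b_t$ by one, so it contributes $\lambda r s_1 s_2$; shifts and deaths contribute $\nu s_2$ and $\mu$; and the no-event term contributes $-(\lambda+\mu+\nu)s_1$, so that
\[
H_1^+(r,s_1,s_2,h) = s_1 + U_1^+(r,s_1,s_2)\,h + o(h), \quad U_1^+ = \lambda r s_1 s_2 + \nu s_2 + \mu - (\lambda+\mu+\nu)s_1.
\]
The analogous calculation from a single type 2 particle gives $U_2^+ = \lambda r s_2^2 + \mu - (\lambda+\mu) s_2$. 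The same bookkeeping yields $U_j^-$ (with $r$ multiplying both $\mu$ terms), $U_j^\rightarrow$ (with $r$ multiplying only the $\nu$ term in $U_1^\rightarrow$, since type 2 particles cannot shift), and $U_j^*$ for particle time, where the ``no event'' contribution expands as $e^{-rh}(1-(\lambda+\mu+\nu)h)s_1 = s_1 - (r+\lambda+\mu+\nu)s_1 h + o(h)$, producing the extra $-rs_j$ term.

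Next I would lift these infinitesimal data to global ODEs via a Chapman--Kolmogorov composition identity. Decomposing each counter additively across the interval, $b_{t+h} = b_h + (b_{t+h}-b_h)$ (and likewise for $f_t, d_t, R_t$), using the strong Markov property, and using $H_{jk}^\bullet = (H_1^\bullet)^j(H_2^\bullet)^k$ from particle independence, one obtains the semigroup identity
\[
H_j^\bullet(r,s_1,s_2,t+h) = H_j^\bullet\!\left(r,\,H_1^\bullet(r,s_1,s_2,t),\,H_2^\bullet(r,s_1,s_2,t),\,h\right),
\]
exactly parallel to the $\phi$-identity in Appendix A. Differentiating at $h=0$ converts this into $\dot H_j^\bullet = U_j^\bullet(H_1^\bullet,H_2^\bullet)$, which reproduces the ODEs for $H_1^\bullet$ stated in the theorem together with an autonomous Riccati equation for $H_2^\bullet$ of the form $\dot H_2^\bullet = c_2 (H_2^\bullet)^2 - c_1 H_2^\bullet + c_0$, where $(c_0,c_1,c_2) = (\mu,\lambda+\mu,\lambda r)$ for births, $(\mu r,\lambda+\mu,\lambda)$ for deaths, $(\mu,\lambda+\mu,\lambda)$ for shifts, and $(\mu,\lambda+\mu+r,\lambda)$ for particle time. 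The initial conditions $H_j^\bullet(r,s_1,s_2,0) = s_j$ follow from $b_0 = f_0 = d_0 = R_0 = 0$.

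To obtain closed forms for $H_2^\bullet$, I would use the standard Riccati trick: any root $y$ of $c_2 y^2 - c_1 y + c_0 = 0$ is a constant solution, and the substitution $z = (H_2^\bullet - y)^{-1}$ cancels the constant terms and reduces the Riccati equation to the linear ODE $z' + (2c_2 y - c_1) z = -c_2$, which integrates immediately. Matching the initial value $z(0) = (s_2-y)^{-1}$ and rearranging yields the formulas in the theorem, with $y$ specialising to $y_b$, $y_d$, and $y_*$ in the birth, death, and particle-time cases. For shifts, $U_2^\rightarrow$ coincides with the pseudo-generating function $u_2$ of Appendix A because no shift can originate from a type 2 particle, so $H_2^\rightarrow \equiv \phi_2$ and inherits the closed form already derived there.

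The main obstacle will be handling the particle-time case, because $R_t$ is a continuous additive functional rather than a jump counter: the Feynman--Kac factor $e^{-rh}$ accumulates even during the no-event contribution, and it is precisely this effect that produces the non-standard $-rH_j^*$ term in the ODE. Verifying that this extra term feeds through the Riccati reduction and shifts the discriminant from $(\lambda+\mu)^2 - 4\lambda\mu r$ (births) to $(\lambda+\mu+r)^2 - 4\lambda\mu$ — so that $y_*$ takes the stated form — is the one nontrivial calculation. Once that is established, the remaining three systems follow by the same template with only different coefficient assignments, and the $H_1^\bullet$ ODEs are then read off directly from $U_1^\bullet$.
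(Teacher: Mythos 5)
Your proposal is correct and follows the same overall strategy as the paper's Appendix B: a first-transition expansion to obtain the pseudo-generating functions $u_1^\bullet, u_2^\bullet$, a Chapman--Kolmogorov/branching composition to lift these to backward ODEs, and the constant-solution-plus-$z=(H_2-y)^{-1}$ substitution to solve the autonomous Riccati equation for $H_2^\bullet$ in closed form; your coefficient tables $(c_0,c_1,c_2)$ and the resulting roots $y_b, y_d, y_*$ all match the paper's. The one place where you genuinely diverge is the particle-time case. The paper treats $R_t$ as a Markov reward process, works with the Laplace--Stieltjes transform $V_{ij,kl}(r;t)=\int_0^\infty e^{-rx}\,\mathrm{d}q^*_{ij,kl}(x;t)$, derives a renewal-type integral equation following Neuts, and identifies the generator of the transformed process as $\mathbf{Q}-r\,\mathrm{diag}(\mb{a})$, i.e., the original generator with $-r$ added to each diagonal entry. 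You instead obtain the same $-rs_j$ correction by a direct Feynman--Kac first-transition expansion, writing the no-event contribution as $e^{-rh}\bigl(1-(\lambda+\mu+\nu)h\bigr)s_1$ and noting that $e^{-rR_h}=1+O(h)$ on the event branches. Your route is shorter and keeps all four derivations under a single template; the paper's route is heavier but makes explicit the matrix-exponential representation $\mathbf{V}(r;t)=\exp[(\mathbf{Q}-r\,\mathrm{diag}(\mb{a}))t]$ and the connection to reward functionals, which justifies the formal manipulation of the continuous additive functional more carefully. Both arguments also rely on the additivity of the counters ($b_{t+h}=b_h+(b_{t+h}-b_h)$, and likewise for $R$) together with the branching property to get the semigroup identity, which you state explicitly and the paper leaves implicit in its appeal to ``the same Chapman--Kolmogorov argument.''
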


\begin{proof}
Begin by expanding 
\[ H_{10}^+ (t, r, s_1, s_2) = \sum_n \sum_k \sum_l Pr(b_t  = n, x_t = (k,l) | x_0 = (1,0) ) s_1^k s_2^l r^n .  \]
Recall the jump rates of the process in equation \eqref{eq:rates}: $a_1$ correspond to the process beginning with 1 type one particle, and $a_2$ are jump rates starting with 1 type two particle. 
We can express the probability terms in $H_{10}^+$ using the same type of first-order decomposition as in equation \eqref{eq:pgf}; for instance, in the event of a birth, \[Pr(b_t = 1, x_t = (1,1) | x_0 = (1,0) ) = a_1(1,1) + o(t)  = \lambda + o(t) \] 
and for other values of $n > 1$, \[ Pr(b_t = n, x_t = (1,1) | x_0 = (1,0) ) = o(t). \]
In the case of a shift,
\[ Pr(b_t = 0, x_t = (0,1) | x_0 = (1,0)) = a_1(0,1) + o(t) = \nu + o(t) \] 
and for other values of $n \neq 0$, \[ Pr(b_t = n, x_t = (0,1) | x_0 = (1,0) ) = o(t). \] 
We see that the $r^n$ term in the series $H_{10}^+$ is either $r^1 = r$ if exactly one birth occurs, or $r^0 = 1$ as other powers correspond to more than one event and are absorbed into the $o(t)$ term. Thus, 
\begin{align*}
H_{10}^+ (t, r, s_1, s_2) &= \sum_k \sum_l g_{10,kl}(r,t) s_1^k s_2^l = \sum_n \sum_k \sum_l Pr(b_t  = n, x_t = (k,l) | x_0 = (1,0) ) s_1^k s_2^l r^n \\
&=  s_1 + \lambda s_1 s_2 r + \nu s_2 + \mu - (\lambda + \nu + \mu) s_1 + o(t) := s_1 + u_1^b(s_1,s_2) t + o(t)
\end{align*}
with $u_1^b$ denoting the pseudo-generating function, similarly to \eqref{eq:pgf}. 
With an analogous derivation for $u_2^b$, we arrive at the system
\begin{equation}
\begin{cases}
u_1^b(s_1,s_2) = \lambda r s_1 s_2 + \nu s_2 + \mu - (\lambda + \nu + \mu)s_1\\
u_2^b(s_1,s_2) = \lambda r s_2^2 - (\lambda + \mu) s_2 + \mu, \label{eq:birthPseudo}
\end{cases}
\end{equation}
and since
\[ \frac{d H_{10}^+ (t,r,s_1,s_2)}{dt} \Big| _{t=0} = u_1^b(s1,s2,r), \hspace{25pt} \frac{d H_{01}^+ (t,r,s_1,s_2)}{dt} \Big| _{t=0} = u_2^b(s1,s2,r), \]
we obtain the backward equations system
\begin{equation}\label{eq:backwardBirths}
\begin{cases}
\frac{d}{dt} H_{10}^+(t, s_1, s_2,r) = u_1^b( H_{10}^+(t, s_1, s_2,r), H_{01}^+(t, s_1, s_2,r) ), \\
\frac{d}{dt} H_{01}^+(t, s_1, s_2,r) = u_2^b( H_{10}^+(t, s_1, s_2,r), H_{01}^+(t, s_1, s_2,r) ) 
\end{cases}
\end{equation}
by the same Chapman-Kolmogorov argument used for transition probabilities, subject to initial conditions $H_{10}(t = 0, s_1, s_2, r) = s_1$ and $H_{01}(t = 0, s_1, s_2, r) = s_2$. The systems for deaths and shifts are derived analogously beginning with this first-order expansion technique, and are respectively given by

\begin{equation}
\begin{cases}
u_1^d(s_1,s_2) = \lambda s_1 s_2 + \nu s_2 + r \mu - (\lambda + \nu + \mu)s_1\\
u_2^d(s_1,s_2) = \lambda s_2^2 - (\lambda + \mu) s_2 + r \mu, \label{eq:deathPseudo}
\end{cases}
\end{equation}

\begin{equation}
\begin{cases}
u_1^\rightarrow(s_1,s_2) = \lambda  s_1 s_2 + r \nu s_2 + \mu - (\lambda + \nu + \mu)s_1,\\
u_2^\rightarrow(s_1,s_2) = \lambda  s_2^2 - (\lambda + \mu) s_2 + \mu. \label{eq:shiftPseudo}
\end{cases}
\end{equation}

To derive the system governing the particle time generating function, recall the quantity $q^*_{ij,kl}(x;t) := Pr(R_t \leq x, X(t) = (k,l) | X(0)= (i,j) )$, and consider its Laplace-Stieltjes transform
\begin{equation}\label{eq:laplacestieltjes}
V_{ij,kl}(r;t) = \int_0^\infty e^{-rx} \mathrm{d} q^*_{ij,kl}(x;t) .
\end{equation}
The Laplace-Stieltjes transform of such a probability distribution corresponding to a \textit{reward function}, where $a_{ij}$ is the reward accrued per unit time spent in state $(i,j)$, satisfies the forward equation 
\begin{equation}\label{eq:forwardprob}
\frac{d}{dt} V_{ij,kl}(r;t) = -a_{ij} r V_{ij,kl}(r;t) + \sum_{m=1}^K \sum_{n=1}^K Q_{ij,mn} V_{ij,kl}(r;t) ,
\end{equation}
where $\mathbf{Q}$ is the infinitesimal generator of the Markov chain, with finite or countable number of rows and columns and entries $Q_{ij,kl}$ the instantaneous rates of transitioning from state $(i,j)$ to $(k,l)$, and $ Q_{ij,ij} = -\sum_{m,n \neq i,j} Q_{ij,mn}$. 
Following Neuts \citep{neuts1995}, we derive the following  integral equation:
\[ q_{ij,kl}^*(x,t) = \mathbf{1}_{\{ij=kl\}} \mathbf{1}_{\{x \geq a_{ij}t\}} e^{Q_{ij,ij}t} + \sum_{m,n \neq i,j} \int_0^t e^{Q_{ij,ij} u} Q_{ij,mn} q^*_{mn.kl}(x-a_{ij}u, t-u) \mathrm{d}u. \] 
Taking the Laplace transform of both sides and denoting $\widetilde{V}_{ij,kl}(r;t) = \int_0^\infty e^{-rx} q_{ij,kl}^*(x;t) \mathrm{d} x$, we obtain
\[ \widetilde{V}_{ij,kl}(r,t) = \mathbf{1}_{\{ij=kl\}}r^{-1} \exp \left[ (Q_{ij,ij}-a_{ij})t \right] + \sum_{m,n \neq i,j} \int_0^t e^{Q_{ij,ij}u} Q_{ij,mn} \mathrm{d}u \int_{a_{ij}u}^\infty e^{-rx}q^*_{mn,kl}(x-a_{ij}u; t-u) \mathrm{d}x. \]
Making a change of variables $y = x - a_{ij}u $ in the rightmost integral and multiplying both sides by $\exp \left[ - (Q_{ij,ij} - a_{ij})t \right]$ yields
\[ \exp \left[-(Q_{ij,ij} - a_{ij})t \right] \widetilde{V}_{ij,kl}(r,t) = \frac{1}{r} + \sum_{m,n \neq i,j} \int_0^t \exp \left[ -(Q_{ij,ij} - a_{ij}) (t-u) \right] Q_{ij,mn} \widetilde{V}_{mn,kl}(r; t-u). \]
Next, make another substitution $v = t-u$ and simplify after differentiating the above equation with respect to $t$: we arrive at
\[ \frac{\partial}{\partial t} \widetilde{V}_{ij,kl}(r;t) = -a_{ij} r \widetilde{V}_{ij,kl}(r;t) + \sum_{m=1}^K \sum_{n=1}^K Q_{ij,mn} \widetilde{V}_{mn,kl} (r;t) .\]
Equation \eqref{eq:forwardprob} then follows from $V_{ij,kl}(r;t) = s \widetilde{V}_{ij,kl}(r;t)$, with $V_{ij,kl}(t)(r;0)=~\mathbf{1}_{\{ij=kl\}}$.

The matrix $\mathbf{V}(r;t) := \left\{ V_{ij,kl}(r;t) \right\}$ can therefore be written as a matrix exponential
\begin{equation}\label{eq:matrixexp} 
\mathbf{V}(r;t) := \exp [ \mathbf{Q} - \text{diag}(\mb{a}) r) t] := \exp ( \widetilde{\mathbf{Q}} t),
\end{equation}
where $\text{diag}(\mb{a})$ is the diagonal matrix with diagonal entries $a_{ij}$.
In our case, $a_{ij} = 1$, since we are interested in particle time and the ``reward" that accumulates per unit of time is that quantity of time itself. 
Strictly speaking we don't need infinite dimensional matrix algebra here, but we use it to simplify our notation. 

Note the similarity of equation \eqref{eq:matrixexp} to the matrix exponential corresponding to transition probabilities $\mathbf{P}(t)= \exp(\mathbf{Q} t) $: thus, the system of backward equations for $V_{ij,kl}$ are almost identical to those for transition probabilities $p_{ij,kl}$. The generators $\widetilde{\mathbf{Q}} \neq \mathbf{Q}$ differ only in diagonal entries: instantaneous rates of no event occurring are augmented by an extra $r$ term $\widetilde{Q}_{ij,ij} = -\sum_{m,n \neq i,j} Q_{ij,mn} - r$. The system of backward equations is thus given by 
\begin{equation}\label{eq:PTsystem}
\begin{cases}
u_1^*(s_1,s_2) = \lambda s_1 s_2 + \nu s_2 + \mu - (\lambda + \nu + \mu + r) s_1,\\
u_2^*(s_1,s_2) = \lambda s_2^2 + \mu - (\lambda + \mu + r)s_2,
\end{cases}
\end{equation}
and as we have seen in the derivation for expected births in equation \eqref{eq:backwardBirths}, this implies that the generating function
\[ H_{10}^*(r,s_1,s_2,t) = \sum_k \sum_l \int_0^\infty e^{-rx} \mathrm{d} q^*_{ij,kl}(x;t) = \sum_k \sum_l V_{10,kl}(r,t) s_1^k s_2^l \]
also satisfies the same system. 

\subsection*{Reducing the systems}
Each of the four systems for births, shifts, deaths, and particle time can be reduced to a single ODE by first solving the second equation analytically. We demonstrate this in the case of the birth equations  \eqref{eq:backwardBirths}, and abbreviate $H_{10}^+ := H_1, H_{01}^+ := H_2$. Plugging \eqref{eq:birthPseudo} into \eqref{eq:backwardBirths},
\[\begin{cases}
\frac{d}{dt} H_1(t, s_1, s_2,r) = \lambda r H_1 H_2 + \nu H_2 + \mu - (\lambda + \nu + \mu)H_1, \\
\frac{d}{dt} H_2(t, s_1, s_2,r) =  \lambda r H_2^2 - (\lambda + \mu) H_2 + \mu. 
\end{cases} \]
The second equation is a Ricatti equation. To solve it, we first identify a constant solution 
\[y_b = \frac{ \lambda + \mu + \sqrt{ \lambda^2 + 2 \lambda \mu + \mu^2 - 4 \lambda \mu r} }{2 \lambda r} \]
obtained by setting 
\[\frac{d}{dt}H_2 = 0=\lambda r H_2^2 - (\lambda + \mu) H_2 + \mu .\]
Next,  perform a change of variables $z = \frac{1}{H_2 - y_b}$ so that $H_2 = y_1 + \frac{1}{z}$, and thus
\[ \frac{dz}{dt} + (2 y_b \lambda r - \lambda - \mu) z = -\lambda r \]
Using the multiplier method with multiplier $\exp{ \left\{ (2 y_b \lambda r  - \lambda - \mu) t \right\} }$, we obtain
\[ z = e^{-(2y_b \lambda r - \lambda - \mu)t} \left[ \int -\lambda r e^{(2\lambda r y_b - \lambda - \mu)t} dt + C \right]  = \frac{-\lambda r}{2 \lambda r y_b - \lambda - \mu} + C e^{-(2y_b \lambda r - \lambda - \mu)t}. \]
Thus,
\[H_2 = y_b + \frac{1}{z} = y_b + \left[  \frac{-\lambda r}{2 \lambda r y_b - \lambda - \mu} + C e^{-(2y_b \lambda r - \lambda - \mu)t} \right] ^{-1} \]
and from $H_2(0, r, s_1, s_2) = s_2$, we see $C = \frac{1}{s_2 - y_b} + \frac{\lambda r}{2 \lambda r y_b - \lambda - \mu}$.
Finally, we arrive at the full solution to the second ODE
\[ H_2 := g^b(t,s_1,s_2,r) = y_b + \left[  \frac{-\lambda r}{2 \lambda r y_b - \lambda - \mu} + \left( \frac{1}{s_2 - y_b} + \frac{\lambda r}{2 \lambda r y_b - \lambda - \mu} \right)  e^{-(2y_b \lambda r - \lambda - \mu)t} \right] ^{-1}. \]
Plugging this solution into the equation for $H_1$, we have a single ODE that is numerically solvable:
\[ \frac{d}{dt} H_1^+(t,s_1,s_2,r) = \lambda r H_1^+ g^b + \nu g^b + \mu - (\lambda + \mu + \nu) H_1^+ . \]
An analogous solution beginning with Equations \eqref{eq:deathPseudo},  \eqref{eq:shiftPseudo},  and \eqref{eq:PTsystem} instead of \eqref{eq:backwardBirths} and solving the second Ricatti equation is used to simplify the other equation systems, yielding the results presented in Theorem \ref{thm:main}.
\end{proof}

\section*{Appendix C}
Here we include additional figures that support, but are not crucial to, illustrating our simulation results.

Figure \ref{fig:shiftComparison} displays the transition probabilities $p_{(10,0),(ij)}$ for 25 randomly sampled  $(i,j)$ pairs with $0 \leq i,j \leq 32$, calculated by our generating function approach alongside their Monte Carlo estimates and confidence intervals. Monte Carlo estimates are based on $5000$ realizations beginning with an initial count of 10 with $dt=1.0, \lambda=.5, \mu~=.45$ and $\nu$ ranging from $0.3$ to $2.0$. 
\begin{figure}[H]
\centering
\includegraphics[width=.7\paperwidth]{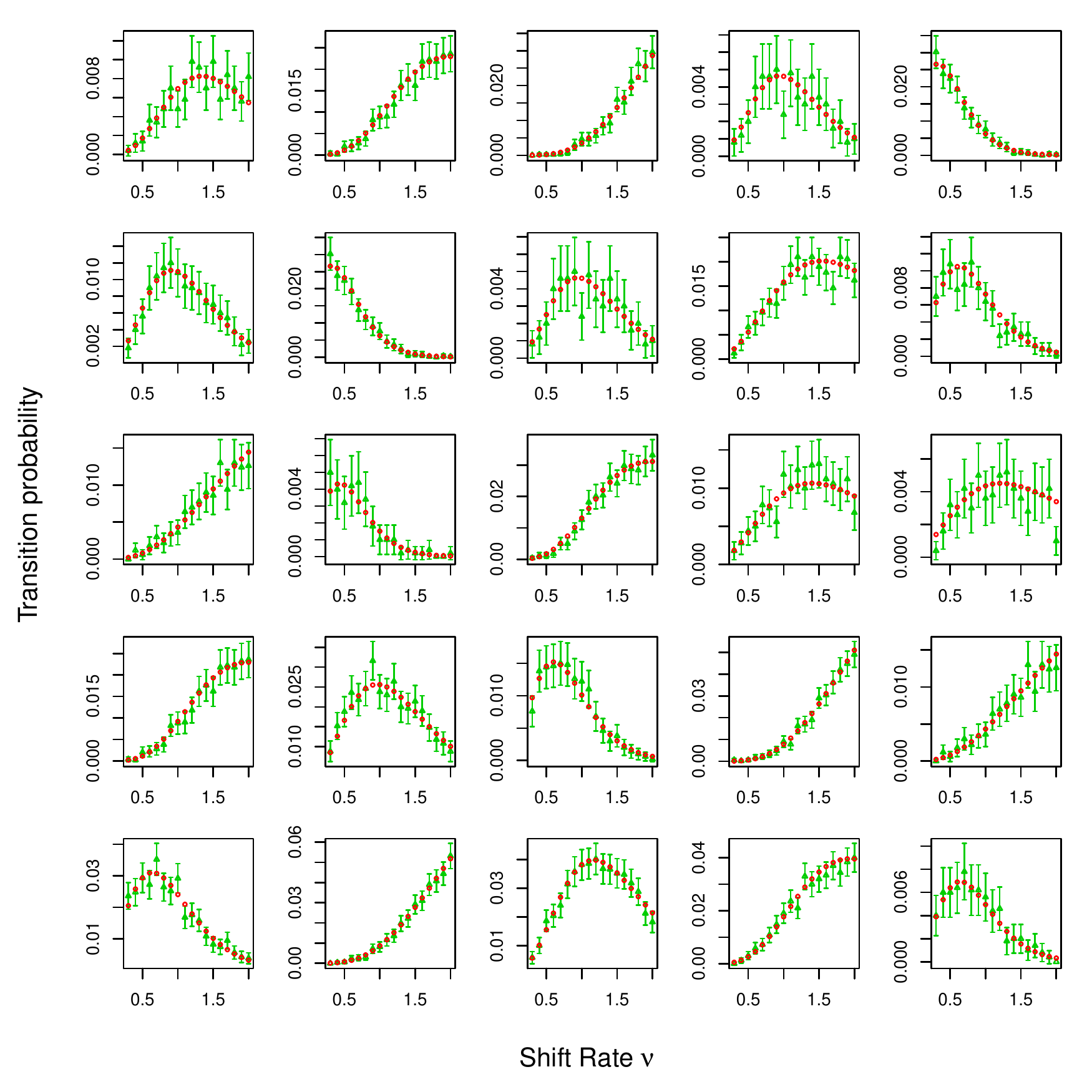}
\caption{Transition probabilities remain accurate when increasing all rates of process, presented over a wide range of $\nu$ values. Green points and intervals correspond to Monte Carlo estimates of transition probabilities and corresponding $95\%$ confidence intervals. The red 
points denote probabilities computed with our generating function method.}
\label{fig:shiftComparison}
\end{figure}

Figure \ref{fig:restricted} shows that restricted moment calculations performed during the E-step are indeed accurate: the following figure corresponds to simulations with 3 times the rates in the Rosenberg-Tanaka paper: $(\lambda, \nu, \mu) = 3 \cdot (.0188, .0026, .0147)$, with 10 initial particles and varying time lengths.
\begin{figure}[H]
\centering
\includegraphics[width = .65\paperwidth]{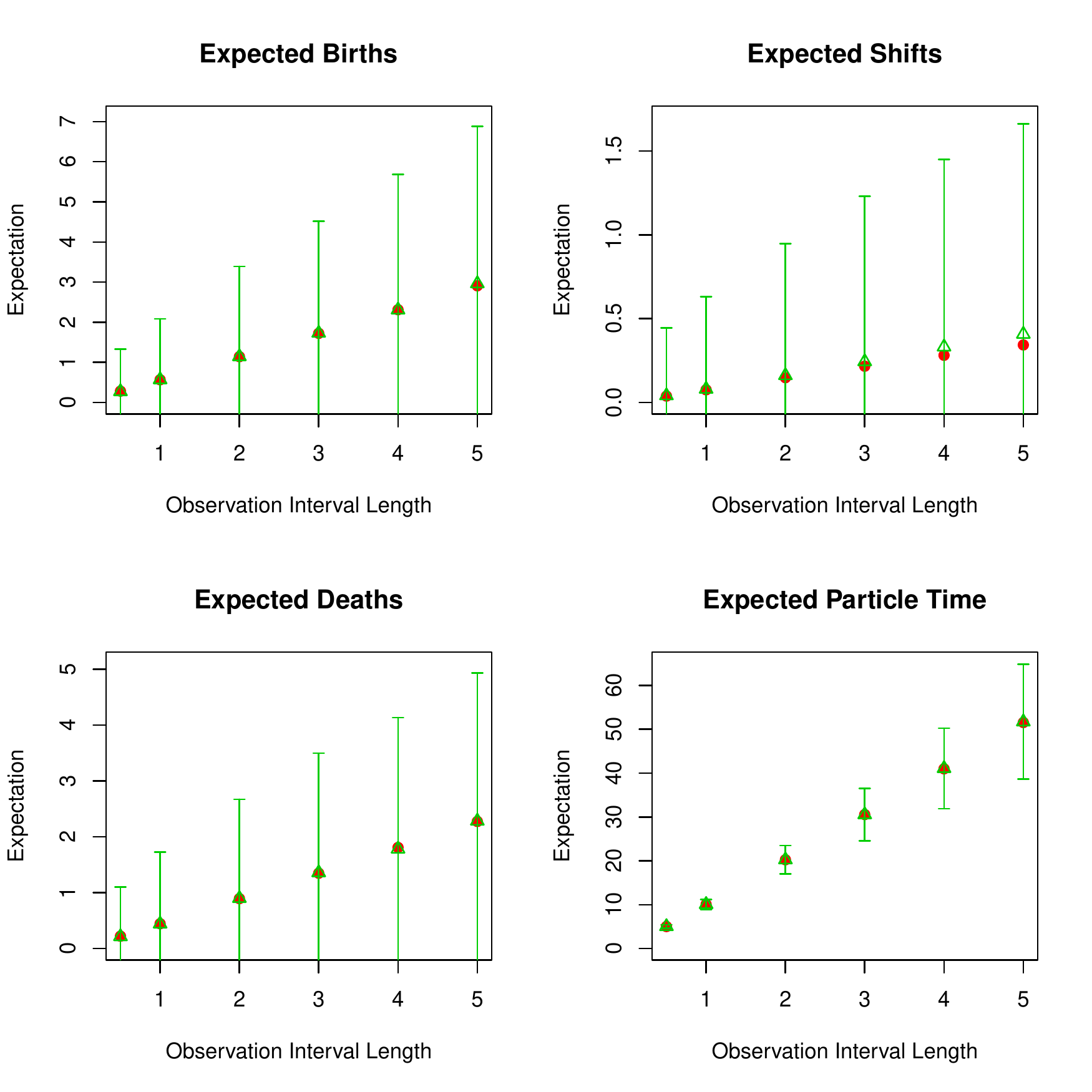}
\caption{Restricted moments calculated by our method (red) compared to approximation over 5000 Monte Carlo simulations and corresponding $95\%$ confidence intervals (green).}
\label{fig:restricted}
\end{figure}

\end{document}